\DeclareMathOperator{\ran}{ran}
\def\tilde{\widetilde}
\def\a{\alpha}
\def\b{\beta}
\def\k{\kappa}
\def\th{\theta}
\def\A{{\cal A}}
\def\B{{\cal B}}
\def\C{{\cal C}}
\def\D{{\cal D}}
\def\R{{\cal R}}
\def\H{{\cal H}}
\def\K{{\cal K}}
\def\S{{\cal S}}
\def\f{{\varphi}}
\def\s{{\sigma}}
\def\x{{{h}}}
\def\PSL{{{\rm PSL}(2,\mathbb R)}}
\def\S2{S^{1(2)}}
\def\RR{\mathbb R}
\newtheorem{theorem}{Theorem}[section]
\newtheorem{lemma}[theorem]{Lemma}
\newtheorem{corollary}[theorem]{Corollary}
\newtheorem{proposition}[theorem]{Proposition}
\theoremstyle{remark}
\newcommand{\ben}{\begin{equation}}
\newcommand{\een}{\end{equation}}
\newcommand{\bthm}{\begin{theorem}}
\newcommand{\ethm}{\end{theorem}}
\newcommand{\bprop}{\begin{proposition}}
\newcommand{\eprop}{\end{proposition}}
\newcommand{\bcor}{\begin{corollary}}
\newcommand{\ecor}{\end{corollary}}
\newcommand{\blem}{\begin{lemma}}
\newcommand{\elem}{\end{lemma}}
\def\PSL{PSU(1,1)}
\def\CC{{\mathbb C}}
\def\SL2{{{\rm SL}(2,\R)}}
\def\PSL2{{{\rm PSL}(2,\Reali)}}
\def\U1{{{\rm V}(1)}}
\def\SU2{{{\rm SV}(2)}}
\def\SU{{{\rm SU}}}
\def\A{{\mathcal A}}
\def\B{{\mathcal B}}
\def\C{{\mathcal C}}
\def\D{{\mathcal D}}
\def\H{{\mathcal H}}
\def\K{{\mathcal K}}
\def\k{K}
\def\S{{\mathcal S}}
\def\bp{{\bf p}}
\def\cH{{\cal H}}
\def\cL{{\cal L}}
\def\bC{{\mathbb C}}
\def\bR{{\mathbb R}}
\def\RR{{\mathbb R}}
\def\a{\alpha}
\def\b{\beta}
\def\th{\vartheta}    
\def\k{\kappa}
\def\x{\xi}
\def\s{\sigma}
\def\f{\varphi}
\title{\Huge{Modular structure of the Weyl algebra}}
\author{{\sc Roberto Longo\thanks{Supported by the ERC Advanced Grant 669240 QUEST ``Quantum Algebraic Structures and Models'', MIUR FARE R16X5RB55W  QUEST-NET and GNAMPA-INdAM. \eject
E-mail: {longo@mat.uniroma2.it}
}
}\\
Dipartimento di Matematica,
Universit\`a di Roma Tor Vergata,\\
Via della Ricerca Scientifica, 1, I-00133 Roma, Italy
}
\date{}
\begin{document}

\maketitle

\begin{abstract}
We study the modular Hamiltonian associated with a Gaussian state on the Weyl algebra. We obtain necessary/sufficient  criteria for the  local equivalence of Gaussian states, independently of the classical results by Araki and Yamagami, Van Daele, Holevo. 
We then present a criterion for a Bogoliubov automorphism to be weakly inner in the GNS representation. 
We also describe the vacuum modular Hamiltonian associated with a time-zero interval in the scalar, massless, free QFT in two spacetime dimensions, thus complementing the recent results in higher space dimensions \cite{LM}.  
In particular, we have the formula for the local entropy of a one-dimensional massless wave packet and Araki's vacuum relative entropy of a coherent state on a double cone von Neumann algebra.  
\end{abstract}

\newpage
\section{Introduction}
The Heisenberg commutation relations  are at the core of Quantum Mechanics. From the mathematical viewpoint, they have a more transparent formulation in Weyl's exponential form. If $H$ is a real linear space equipped with a non-degenerate symplectic form $\b$, one considers the free ${}^*$-algebra $A(H)$ linearly generated by the (unitaries) $V(h)$, $h\in H$, that satisfy the commutation relations (CCR)
\ben\label{CCR}
V(h+k) = e^{i\b (h,k)}V(h)V(k)\, , \quad h,k\in H\, ,
\een
$V(h)^* = V(-h)$. The Weyl algebra $A(H)$ admits a unique $C^*$ norm, so its $C^*$ completion is a simple $C^*$-algebra, the Weyl $C^*$-algebra $C^*(H)$. The representations, and the states, of $A(H)$ and of $C^*(H)$ are so in one-to-one correspondence. We refer to \cite{BR,Pe,DG} for the basic theory.

For a finite-dimensional $H$, von Neumann's famous uniqueness theorem shows that all representations of $C^*(H)$, with $V(\cdot)$ weakly continuous, are quasi-equivalent. 
As is well known, in Quantum Field Theory (QFT) one deals with infinitely many degrees of freedom and many inequivalent representations arise, see \cite{Haag}.

Due to the relations \eqref{CCR}, a state on $C^*(H)$ is determined by its value on the Weyl unitaries; a natural class of states is given by the ones with Gaussian kernel. A state $\f_\a$ is called Gaussian, or quasi-free, if
\[
\f_\a\big(V(h)\big)  = e^{-\frac12 \a(h,h)}\, ,
\]
with $\a$ a real bilinear form $\a$ on $H$, that has to be compatible with $\b$. 

Assuming now that $H$ is separating with respect to $\a$, as is the case of a local subspace in QFT, the GNS vector associated with $\f_\a$ is cyclic and separating for the von Neumann algebra $\A(H)$ generated by $C^*(H)$ in the representation. 
So there is an associated Tomita-Takesaki modular structure, see \cite{Tak}, that we are going to exploit in this paper. 

The modular theory is a deep, fundamental operator algebraic structure that is widely known and we refrain from explaining it here, giving for granted the reader to be at least partly familiar with that. We however point out two relevant aspects for our work. The first one is motivational and concerns the growing interest on the modular Hamiltonian in nowadays physical literature, especially in connection with entropy aspects (see e.g. refs in \cite{L18}). The other aspect concerns the crucial role taken by the modular theory of standard subspaces, see \cite{LN}; this general framework, where Operator Algebras are not immediately visible, reveals a surprisingly rich structure and is suitable for applications of various kind. Most of our paper will deal with standard subspaces. 

Our original motivation for this paper was the description of the local modular Hamiltonian associated with the free scalar QFT in $1+1$ spacetime dimension, in order to complement the higher dimensional results, that were obtained in \cite{LM}. 
In the end, our massless formula in Section \ref{MHd1}, is obtained directly. 
 As a consequence, we compute the local entropy of a low dimensional wave packet. This gives also Araki's vacuum relative entropy  of a coherent state on a local von Neumann algebra the free, massless, scalar QFT, now also in the $1+1$ dimension case. We refer to \cite{L18,L19,CLR19,LM} for background results and explanation of the context. 

We now briefly describe part of the background of our work. The Canonical Commutation Relations \eqref{CCR} and  Anti-Commutation Relations
are ubiquitous and intrinsic in Quantum Physics. The study of the corresponding linear symmetries (symplectic transformations, CCR case) is a natural problem; the automorphisms of the associated operator algebras are called Bogoliubov automorphisms, see \cite{EK,DG}.  The classical result of Shale \cite{Sh} characterises the Bogoliubov automorphisms that are unitarily implementable on the Fock representation. Criteria of unitary implementability in a quasi-free representation were given by Araki and Yamagami \cite{Ar82}, van Daele \cite{vD} and Holevo \cite{Ho}; these works are independent of the modular theory, although the last two rely on the purification construction, that originated in the classical paper by Powers and St\o rmer in the CAR case \cite{PS}. Woronowicz partly related the purification map to the modular theory and reconsidered the CAR case \cite{Wo}.  However, the modular structure of the Weyl algebra has not been fully exploited so far, although the CCR case is natural to be studied from this point of view.  

We work in the context of the standard form of a von Neumann algebra studied by Araki, Connes and Haagerup \cite{Ar74,Con74, Ha75}. If an automorphism of a von Neumann algebra in standard form is unitarily implementable, then it is canonically implementable; so we know where to look for possible implementation. This will provide us with a criterion for local normality that is independent of the mentioned previous criteria, we however make use of Shale's criterion. We shall give necessary/sufficient criteria for the quasi-equivalence of Gaussian states in terms of the modular data. 

A key point in our analysis concerns the cutting projection on a standard subspace studied in \cite{CLR19}. This projection is expressed in terms of the modular data; on the other hand it has a geometric description in the QFT framework. The cutting projection is thus a link between geometry and modular theory, so it provides a powerful tool. 

Among our results, we have indeed necessary/sufficient criteria for the quasi-equi\-va\-lence of two Gaussian states $\f_{\a_1}$, $\f_{\a_2}$ on $C^*(H)$, in terms of the difference of certain functions of the modular Hamiltonians, that are related to the cutting projections. 

The following diagram illustrates the interplay among the three equivalent structures associated with standard subspaces and the geometric way out to QFT:
\[
\begin{tikzcd}
\boxed{\text{modular data}}\arrow[drr, leftrightarrow] \arrow[dd, leftrightarrow]  &  & \\  
 & & \boxed{\text{subspace geometry}} \arrow[rrr, dashed, "\text{cutting projection}", "geometric" '] & & & \boxed{\text{QFT}} \\
\boxed{\text{complex structure}}\arrow[urr, leftrightarrow]  & & 
\end{tikzcd} 
\]
Our paper is organised as follows. We first study the modular structure of standard subspaces, especially in relation to polarisers and cutting projections. We then study the local normality/weak innerness of Bogoliubov transformations, and the quasi-equi\-valence of Gaussian states in terms of modular Hamiltonians and other modular data. 

In the last part, we give the formula for the low dimensional modular Hamiltonian in the free massless Quantum Field Theory. We also include appendices of independent interest. 

This manuscript is a shortened version of the published article \cite{LW}, and takes into account the correction pointed out in the erratum. 

\section{Basic structure}
This section contains the analysis of some general, structural aspects related to closed, real linear subspaces of a complex Hilbert space, from the point of view of the modular theory. 

\subsection{One-particle structure}\label{OPS}

Let $H$ be a real vector space. A {\it symplectic form} $\b$ on $H$ is a real, bilinear, anti-symmetric form on $H$. We shall say that $\b$ is {\it non-degenerate} on $H$ if 
\[ 
\ker\b \equiv \{h\in H: \b(h,k) = 0 \, ,\ \forall k\in H\} = \{0\}\, .
\]
We shall say that $\b$ is {\it totally degenerate} if $\ker\b = H$, namely $\b=0$. 
A {\it symplectic space} is a real linear space $H$ equipped with a symplectic form $\b$. 

Given a symplectic space $(H,\b)$, a real scalar product $\a$ on $H$ is \emph{compatible} with $\b$ (or $\b$ is compatible with $\a$) if the inequality 
\ben\label{ba}
 \b(h,k)^2\leq   \a(h,h) \a(k,k) \ , \quad h,k\in H\, ,
\een
holds. Given a compatible $\a$, note that $\ker\b$ is closed (w.r.t. $\a$), $\b= 0$ on $\ker\b$  and $\b$ is non-degenerate on $(\ker\b)^\perp$. 
Clearly, $\b$ extends to a symplectic form on the completion $\bar H$ of $H$ w.r.t. $\a$, compatible with the extension of $\a$. (However $\b$ may be degenerate on $\bar H$ even if $\b$ is non-degenerate on $H$.)

A \emph{one-particle structure} on 
$H$ associated with the compatible scalar product $\a$ (see \cite{Kay}) is  a pair $(\H,\kappa)$, where $\cH$  is a complex Hilbert space and $\kappa: H\to\cH$ is a real linear map satisfying
\begin{itemize}
\item[$a)$] $\Re(\kappa(h_1),\kappa(h_2))= \a(h_1,h_2)$ and $\Im(\kappa(h_1),\kappa(h_2))= \b(h_1,h_2)$, \ $h_1,h_2\in H$,
\item[$b)$] $\kappa(H)+i\kappa(H)$ is dense in $\H$. 
\end{itemize} 
Note that $\k$ is injective because
\ben\label{inj}
h\in H,\ \k(h) = 0 \Rightarrow \Re(\kappa(h),\kappa(h)) = 0 \Rightarrow \a(h,h) = 0 \Rightarrow h = 0\, .
\een
With $\bar H$ the completion of $\bar H$ w.r.t. $\a$, $\b$ extends to a compatible symplectic form on $\bar H$. 
Then $\k$ extends to a real linear map $\bar\k:\bar H\to \H$ with $(\H,\bar\k)$ a one-particle structure for $\bar H$.  

In the following proposition, we shall anticipate a couple of facts explained in later sections. 
The uniqueness can be found in \cite{Kay}; the existence is inspired by \cite{Pe}. 
\bprop\label{onep}
Let $H$ be a symplectic space with a compatible scalar product $\a$.    
There exists a one-particle structure $(\H, \kappa)$ on $H$ associated with $\a$. It is unique, modulo unitary equivalence; namely, if  $(\cH',\k')$ is another one-particle structure on $H$, there exists a unitary $U: \cH\to \cH'$ such that  the following diagram commutes:
\[
 \begin{tikzcd}[row sep=tiny]
& \H \arrow[dd, "U"] \\
H \arrow[ur, "\kappa"]  \arrow[dr, "\kappa'" '] & \\
& \H'
\end{tikzcd}
\]
\eprop
\proof
{\it Uniqueness.} The linear map $U: \k(h) \mapsto \k'(h)$ is well defined on $\k(H)$ by \eqref{inj}. Moreover, it extends to a complex linear map $\kappa(H)+i\kappa(H) \to \k'(H)+i\k'(H)$ 
and is isometric because
\begin{multline*}\label{inj2}
||\k(h) + i \k(k)||^2 =  
||\k(h)||^2 + ||\k(k)||^2 + 2\Re(\k(h), i \k(k))  \\
= ||\k(h)||^2 + ||\k(k)||^2 - 2\Im(\k(h),  \k(k))  = \a(h,h) + \a(k,k) -2\b(h,k)  = ||\k'(h) + i \k'(k)||^2\, ,
\end{multline*}
so $U$ extends to a unitary operator with the desired property. 

{\it Existence.} By replacing $H$ with its completion w.r.t. $\a$, we may assume that $H$ is complete. Suppose first that $\b$ is totally degenerate, i.e. $\b =0$, and let $H_\CC$ the usual complexification of $H$, namely $H_\CC = H \oplus H$ as real Hilbert space with complex structure given by the matrix 
$i  = 
\begin{bmatrix} 0  & -1 \\
1  & 0
\end{bmatrix}$. Then $\k: h\in H\mapsto h\oplus 0\in H_\CC$ is a one-particle structure on $H$ associated with $\a$. 

Suppose now that $\b$ is non-degenerate and consider the polariser $D_H$ (Sect. \ref{polariser}). 
If $\ker(D_H^2 + 1) =\{0\}$, i.e. $H$ is separating (see Lemma \ref{nondl}), the orthogonal dilation provides a one-particle structure on $H$ associated with $\a$ (Sect. \ref{Opur}). If $D^2_H = -1$, then $D_H$ is a complex structure on $H$, so the identity map is a one-particle structure. Taking the direct sum, we see that a 
one-particle structure exists if $\b$ is non-degenerate. 

The existence of a one-particle structure then follows in general because $H = H_a\oplus H_f$, where the restriction of $\b$ to $H_a$ is totally degenerate and to $H_f$ is non-degenerate. 
\endproof

\subsection{Polariser}\label{polariser}
Let $H\subset \cH$ be a closed, real linear subspace of the complex Hilbert space $\H$. By the Riesz lemma, there exists a unique bounded, real  linear operator $D_H$ on $H$ such that
\ben\label{polar}
\b(h,k) = \a(h, D_H k) \, ,\quad h,k\in H\, ,
\een
with $\a(\cdot,\cdot) = \Re(\cdot,\cdot)$, $\b(\cdot,\cdot) = \Im(\cdot,\cdot)$

We have 
\[
||D_H|| \leq 1\, , \quad  D_H^* = - D_H\, .
\]
The operator $D_H$ is called the {\it polariser} of $H$. 
As\[
\Im (h, k) = -\Re (h, ik) = - \Re(h, E_H ik) \, ,\quad h,k\in H\, ,
\]
we have one of our basic relations
\ben\label{P1}
D_H = - E_H i |_H\, ,
\een
where $E_H$ is the orthogonal projection onto $H$. 

Let $H' = (iH)^{\perp_\RR}$ be the {\it symplectic complement} of $H$. We shall say that $H$ is {\it factorial} if $H\cap H'= \{0\}$. 
\blem\label{nondl}
We have
\ben\label{sep}
{\rm ker}(D_H^2 + 1) = H\cap  i H\, ,
\een
thus $H$ is separating iff $\ker(D_H^2 + 1) = \{0\}$. 
Furthermore,
\ben\label{nond}
 {\rm  ker}(D_H) =  \ker\b = H\cap H' \,  . 
\een
thus $H$ is factorial iff {\rm ker}$(D_H) = \{0\}$. 
\elem
\proof
As $D_H = - E_H i |_H$, with $E_H$ the orthogonal projection of $\H$ onto $H$ \eqref{P1}, we have 
\ben\label{D2}
D_H^2 = E_H i E_H i |_H = - E_H E_{iH}|_H
\een
so, if $h\in H$,
\[
(D_H^2 + 1)h= 0  \Leftrightarrow   E_H E_{iH} h =   h \Leftrightarrow h \in H\cap iH\, ,
\]
showing the first part of the lemma. 

Last assertion follows as
\[
\ker\b  = \ran(D_H)^\perp = \ker(D_H^*) = \ker(D_H)
\]
and clearly $\ker \b = H\cap H'$. 
\endproof
\bprop
$h \in \ker(D_H^2 + 1) \Leftrightarrow ||D_H h|| = ||h||  \Leftrightarrow D_H h = -ih$. 
\eprop
\proof
Let $h\in \ker(D_H^2 + 1)$, thus $D_H^2 h = - h$, so $||D^2_H h|| = ||h||$ and this implies $||D_H h|| = ||h||$ because
$||D_H|| \leq 1$. 
Thus $||E_H i h|| = ||h|| = ||ih||$, so $h\in iH$; hence $h\in H\cap  i H$.  So $D_H h = -E_H i h = -ih$. 

Conversely, assume that $D_H h  = -ih$; then $ih\in H$, so $||D_H h|| = ||E_H i h||= ||h||$. 
Finally, assume the equality 
$||D_H h|| = ||h||$ to hold. Then $||E_H i h|| = ||ih||$, so $E_H i h = ih$, hence
$D_H h = -E_Hi h = -ih$, so $D_H^2 = -h$, namely $h \in \ker(D_H^2 + 1)$. 
\endproof

\subsection{Standard subspaces}
Let $\cH$ be a complex Hilbert space and $H$ a closed, real linear subspace. We say that $H$ is \emph{cyclic} if $H + i H$ is dense in $\cH$, \emph{separating} if $H \cap i H = \{0\}$, \emph{standard} if it is both cyclic and separating.  

Let $H\subset \cH$ be a closed, real linear subspace of $\H$ and $\b = \Im(\cdot,\cdot)$ on $H$, where $(\cdot,\cdot)$ is the complex scalar product on $\H$; then $\b$ is a symplectic form on $H$ that makes it a symplectic space. Moreover, 
$\a=\Re(\cdot,\cdot)$ is a compatible real scalar product on $H$. 

An {\it abstract standard subspace} is a triple $(H,\a,\b)$, where $H$ is a real Hilbert space, $\a$ is the real scalar product on $H$ and $\b$ is a symplectic form on $H$ compatible with $\a$,  so that $H$ separating, that is $\ker(D^2_H + 1) = \{0\}$, with $D_H$ the polariser of $H$, see Lemma \ref{nondl}. 

By Proposition \ref{onep}, an abstract standard subspace can be uniquely identified, up to unitary equivalence, with a standard subspace of a complex Hilbert space as above. 

We shall say that the abstract standard subspace $(H,\a,\b)$ is {\it factorial} if $\ker(D_H) = \{0\}$, namely $\b$ is non-degenerate. 

In view of the above explanations, we shall often  directly deal with standard subspaces of a complex Hilbert space $\H$. 

Given a standard subspace $H$ of $\H$,  
we shall denote by $J_H$ and $\Delta_H$ the {\it modular conjugation} and the {\it modular operator} of $H$; they are defined by the polar decomposition $S_H = J_H \Delta_H^{1/2}$ of the closed, densely defined, anti-linear involution on $\H$
\[
S_H : h + ik \mapsto h - ik\, , \quad h,k\in H\, .
\]
$\Delta_H$ is a non-singular, positive selfadjoint operator, $J_H$ is an anti-unitary involution and we have 
\ben\label{JDJ}
J_H\Delta_H J_H = \Delta_H^{-1}\, .
\een 
The fundamental relations are
\[
\Delta_H^{is} H = H\, , \quad J_H H = H'\, , \quad s\in\RR\, ,
\]
see \cite{RV, LRT, LN}. 
We denote by
\[
L_H = \log\Delta_H
\]
the {\it modular Hamiltonian} of $H$. We often simplify the notation setting $ L = L_H$ and similarly for other operators. 

Assume now $H$ to be standard and factorial.
Let  $E_H$ be the real {\it orthogonal projection} from $\H$ onto $H$ as above and $P_H$ the {\it cutting projection}
\ben\label{cutting}
P _H: h + h' \mapsto h\, , \quad h\in H,\, h'\in H'\, .
\een
$P_H:D(P_H)\subset\H\to\H$ is a closed, densely defined, real linear operator with domain $D(P_H) = H + H'$. 

Recall two formulas respectively in \cite{FG} and in \cite{CLR19}:
\begin{align}
&E_H = (1 + \Delta_H)^{-1} + J_H\Delta_H^{1/2} (1 + \Delta_H)^{-1} \label{fE}\, ,\\
&P_H = (1 - \Delta_H)^{-1} + J_H\Delta_H^{1/2} (1 - \Delta_H)^{-1} \label{fP} \, ;
\end{align}
more precisely, $P_H$ is the closure of the right hand side of  \eqref{fP}.

These formulas can be written as
\begin{align}
&E_H = (1 + S_H )(1 + \Delta_H)^{-1} \, ,\label{fE1} \\
&P_H = (1 + S_H ) (1 - \Delta_H)^{-1}  \, , \label{fP1}
\end{align}
so give
\ben\label{PE}
P_H = E_H (1 + \Delta_H) (1 - \Delta_H)^{-1} = - E_H \coth(L_H/2)\, .
\een
In the following, if $T:D(T)\subset \H\to\H$ is a real linear operator, $T|_H$ is the restriction of $T$ to $D(T|_H)\equiv D(T)\cap H$, that we may consider also as operator $H\to H$ if $\ran(T|_H)\subset H$, as it will be clear from the context. 
\bprop\label{pol}
Let $H\subset \H$ be a factorial standard subspace. The polariser $D_H$ of $H$ and its inverse $D_H^{-1}$ are given by
\begin{gather}\label{DH1}
D_H = - E_H i|_H = i( \Delta_H -1)(\Delta_H +1)^{-1}|_H= i\tanh(L_H/2)|_H\, ,
\\
D_H^{-1} = P_H i|_H = - i( \Delta_H +1)(\Delta_H -1)^{-1}|_H = - i\coth(L_H/2)|_H \, .\label{DH2}
\end{gather}
As a consequence, $P_H i|_H$ is a skew-selfadjoint real linear operator on $H$. 
\eprop
\proof
As $J_H\Delta_H J_H = \Delta_H^{-1}$, eq. \eqref{fE} gives
\[
E_H  = (1 + \Delta_H)^{-1} +\Delta_H(1 + \Delta_H)^{-1}  J \Delta_H^{1/2}\, ,
\]
therefore
\begin{multline}
E_H ih  =\Big( (1 + \Delta_H)^{-1} +\Delta_H(1 + \Delta_H)^{-1}  S_H \Big) ih = (1 + \Delta_H)^{-1}ih - \Delta_H(1 + \Delta_H)^{-1} ih\\
= (1 - \Delta_H)(1 + \Delta_H)^{-1}ih \, ,
\end{multline}
$h\in H$, 
thus
\ben\label{E_H i}
E_H i |_{H} = (1 - \Delta_H)(1 + \Delta_H)^{-1}i|_H\, .
\een
As $D_H = - E_H i |_H$ \eqref{P1},  eq. \eqref{DH1} is proved.  

Concerning formula \eqref{DH2}, since $H$ is left invariant by $(\Delta_H + 1)(\Delta_H -1)^{-1}i$, 
from \eqref{PE} we get 
\[
P_H i|_H = - E_H  \coth(L_H/2)i|_H = - i \coth(L_H/2)|_H
= - i(\Delta_H + 1)(\Delta_H -1)^{-1}|_H \, .
\]
So $P_H i|_H$ is skew-selfadjoint because $H$ is globally $\Delta_H^{is}$-invariant, 
$s\in \RR$ \cite[Prop. 2.2]{LM}. 
\endproof
\bcor\label{D1}
We have
\ben\label{D12}
\sqrt{1 + D_H^{2}} = 2(\Delta_H^{1/2} + \Delta_H^{-1/2})^{-1}|_H  =  \frac1{\cosh (L_H/2)}\Big|_H\, .
\een
\ben\label{D11}
D_H^{-1}\sqrt{1 + D_H^{2}} = - 2i(\Delta_H^{1/2} - \Delta_H^{-1/2})^{-1}|_H  = - i \frac1{\sinh (L_H/2)}\Big|_H\, ;
\een
\ecor
\proof
By Prop. \ref{pol} $D_H = i\tanh(L_H/2)|_H$, thus 
\ben\label{1D2}
D_H^2 = -\tanh^2(L_H/2)|_H\, ,
\een
 so $D_H^2$ is a bounded selfadjoint operator on $H$ (as real linear operator). Therefore
\ben\label{1D3}
1 + D_H^2 = \big(1 -\tanh^2(L_H/2)|_H\big)\big|_H  =  \frac1{\cosh^2(L_H/2)}\Big|_H \, ,
\een
thus \eqref{D12} holds. 

By Prop. \ref{pol} we then have
\[
D_H^{-1}\sqrt{1 + D_H^{2}}  
= - i \frac{\coth(L_H/2) }{\cosh (L_H/2)}\Big|_H  
= - i \frac1{\sinh (L_H/2)}\Big|_H\, .
\]
\endproof
The following corollary follows at once from \cite{LX21}. The type of a subspace refers to the second quantisation von Neumann algebra. 
\bcor\label{corEED}
We have
\ben\label{EDD}
E_H E_{H'}|_H = 1 + D_H^{2} \, .
\een
Therefore, $H$ is a type $I$ subspace iff $1 + D_H^{2}$ is a trace class operator. 
\ecor
\proof
By \cite[Lemma 2.4]{LX21}, we have $E_H E_{H'}|_H = 4\Delta_H(1 + \Delta_H)^{-2}|_H$;
by \eqref{1D3}, we have 
\[
4\Delta_H(1 + \Delta_H)^{-2}|_H = \frac1{\cosh^2(L_H/2)}\Big|_H =
 1 + D_H^{2}\, .
\] 
The corollary thus follows by \cite[Cor. 2.6]{LX21}. 
\endproof
By \eqref{EDD} and \eqref{D2}, we have the nice identity
\ben
E_H E_{H'}|_H + E_H E_{iH}|_H = 1 \, .
\een
Let $(H,\a_k, \b)$ be abstract standard subspaces, $k=1,2$, and suppose that $\a_1$ is equivalent to $\a_2$, thus there exists a bounded, positive linear map $T: H\to H$ with bounded inverse such that $\a_2(h,k) = \a_1(h,Tk)$. Then
\[
\a_1(h, D_1 k) = \b(h,k) = \a_2(h, D_2 k)  = \a_1(h, T D_2 k)\, ,
\]
thus $D_1 = T D_2$. 

\subsection{Orthogonal dilation}\label{Opur}
Let $H$ be a real Hilbert space, with real scalar product $\a$, and consider the doubling
\[
\tilde H = H \oplus H
\]
(direct sum of real Hilbert spaces).
We consider a symplectic form $\b$ on $H$, that we assume to be non-degenerate and compatible with $\a$.  Let $D$ be the polariser of $\b$ on $H$ given by \eqref{polar}. So $\ker(D) = \{0\}$. We also assume that $\ker(1 + D^2) =\{0\}$, namely $(H,\a, \b)$ is a factorial abstract subspace \eqref{sep}. 
Set
\ben\label{tD}
\iota   
= \begin{bmatrix} D  & V\sqrt{1 + D^{2}}\, \\
V\sqrt{1 + D^{2}}  & - D\end{bmatrix}
\, ,
\een
with $V$ the phase of $D$ in the polar decomposition, $D = V|D|$; note that $V$ commutes with $D$, because $D$ is skew-selfadjoint, and $V^2 = -1$ (see \cite{Pe,BCD}). 
Then $\iota$ is a unitary on $\tilde H$ and $\iota^2 = -1$, namely $\iota$ is a complex structure on $\tilde H$. 

Let $\cH$ be the complex Hilbert space given by $\tilde H$ and $\iota$. The scalar product of $\H$ is given by
\[
(h_1 \oplus h_2, k_1 \oplus k_2)  
= \tilde\a(h_1 \oplus h_2, k_1 \oplus k_2) + i \tilde\b(h_1 \oplus h_2, k_1 \oplus k_2)
\]
with $\tilde\a \equiv \a\oplus\a$ and $\tilde\b(h_1 \oplus h_2, k_1 \oplus k_2) =
\tilde\a(h_1 \oplus h_2, \iota (k_1 \oplus k_2))$. 

The embedding $\k : H \to \cH$
\[
\k : h \mapsto \kappa(h) \equiv h\oplus 0 
\]
satisfies the condition $b)$ in Sect. \ref{OPS}, that is $\tilde\a(\kappa(h),\kappa(k)) = \a(h,k)$ and
\[
\tilde\b(\kappa(h),\kappa(k)) = \tilde\a(h \oplus 0, \iota (k \oplus 0)) =
\tilde\a(h \oplus 0, Dk \oplus V\sqrt{1 + D^{2}}\, k)) = \a(h, D k) = \b(h,k)\, ,
\]
$h,k\in H$. 
\blem
$\k(H)$ cyclic and separating in $\tilde H$,
so $\k$ is a one-particle structure for $H$ with respect to $\a$ and $\k(H)$ is a factorial subspace. 
\elem
\proof
$\k(H)$ cyclic means that the linear span of $H\oplus 0$ and $\{\iota (h\oplus 0): h\in H\}$ is dense in $\cH$. 
As
\[
\iota (h\oplus 0) = Dh\oplus - V\sqrt{1 + D^{2}}\, h\, ,
\]
$\k(H)$ is cyclic iff $\ran(V\sqrt{1 + D^{2}})$ is dense, thus iff ${\rm ker}(1 + D^2) = \{0\}$. 
The proof is then complete by Lemma \ref{nondl}. 
\endproof
By the above discussion $H\subset \H$  is a factorial standard subspace. 
We call $H\subset\H$ the {\it orthogonal dilation} of $(H,\b)$ with respect to $\a$. 
\subsection{Symplectic dilation}
Let $(H,\a, \b)$ be an abstract factorial standard subspace. Consider the doubled symplectic space
$(H\oplus H ,\hat\b)$,
where $\hat\b = \b\oplus-\b$. 

With $D$ the polariser of $\a$, let $H_0 = \ran(D)$ and set 
\ben\label{tA}
\iota  = 
\begin{bmatrix} D^{-1}  & D^{-1}\sqrt{1 + D^2}\, \\
- D^{-1}\sqrt{1 + D^{2}}  & - D^{-1}\end{bmatrix} \, ,
\een
where the matrix entries are defined as real linear operators $(H,\a) \to (H,\a)$ with domain $H_0$. Then
\[
\iota^2 = -1
\]
on $H_0 \oplus H_0$. 
A direct calculation shows that
\ben\label{ub}
\hat\b(\iota \xi,\iota\eta) \equiv \hat\b( \xi,  \eta)\, ,\quad \xi,\eta \in H_0\oplus H_0\, ;
\een
setting 
\ben\label{ub1}
\hat\a(\xi,\eta) \equiv \hat\b(\xi, \iota \eta)\, ,\quad \xi,\eta \in H_0\oplus H_0\, ,
\een
we have a real scalar product $\hat\a$ on $H_0\oplus H_0$ which is compatible with $\hat \b$. Let $\hat\H$ be the completion of $H_0\oplus H_0$ with respect to $\hat\a$; then $\hat\H$ is a real Hilbert space with scalar product still denoted by $\hat \a$. 

By \eqref{ub}, \eqref{ub1}, $\iota$ preserves $\hat \a$, so the closure of $\iota$ is a complex structure on $H$, and $\iota$ is the polariser of $\hat \a$ w.r.t. $\hat\b$. 
Then $\hat\b$ extends to a symplectic form on $\H$ compatible with $\hat\a$. 
So $\hat\H$ is indeed a complex Hilbert space and $H\subset\hat\H$ is a real linear subspace, where $H$ is identified with $H\oplus 0$. 

We call $H\subset\hat\H$ the {\it symplectic dilation} of $(H,\b)$ with respect to $\a$. 
\bprop
$H$ is a factorial standard subspace of the symplectic dilation $\hat\H$. Therefore the symplectic and the orthogonal dilations are unitarily equivalent. 
\eprop
\proof
$H$ is complete, thus closed in $\hat\H$. Since the polariser of $H$ in $\hat\H$ is equal to $D$, the proposition follows by Lemma \ref{nondl}. 
\endproof

\section{Bogoliubov automorphisms}
\label{BW}
In this section we study symplectic maps that promote to unitarily implementable automorphisms on the Fock space. 

Given a symplectic space $(H, \b)$,  we consider the {\it Weyl algebra} $A(H)$ associated with $H$, namely the free 
${}^*$-algebra complex linearly generated by the Weyl unitaries $V(h)$, $h\in H$, that satisfy the commutation relations
\[
V(h+k) = e^{i\b (h,k)}V(h)V(k)\, , \quad V(h)^* = V(-h)\, , \ \quad h,k\in H\, .
\]
The $C^*$ envelop of $A(H)$ is the {\it Weyl $C^*$-algebra}. 
If $\b$ non-degenerate, there exists a unique $C^*$ norm on $A(H)$ and $C^*(H)$ is a simple $C^*$-algebra. 

Let $\H$ be a complex Hilbert space and $e^\H$ be the Bosonic Fock Hilbert space over $\H$. Then we have the {\it Fock representation} of $C^*(\H_\RR)$ on $e^\H$, where $\H_\RR$ is $\H$ as a real linear space, equipped with the symplectic form $\b \equiv \Im(\cdot,\cdot)$. In the Fock representation, the Weyl unitaries are determined by their action on the vacuum vector $e^0$
\ben\label{Weyl}
V(h)e^{0} = e^{-\frac12 (h,h)}e^{h}\, ,\ h\in \H\, ,
\een
where $e^h$ is the coherent vector associated with $h$.  So the {\it Fock vacuum state} $\f = (e^0,\cdot\, e^0)$ of $C^*(\H_\RR)$ is given by
\ben\label{Fock}
\f\big(V(h)\big)= e^{-\frac12 ||h||^2}\, ,\quad h\in \H\, .
\een
With $H$ any real linear subspace of $\H$, the Fock representation determines a representation of $C^*(H)$ on $e^\H$, which is cyclic on $e^\H$ iff $H$ is a cyclic subspace of $\H$.  We denote by $\A(H)$ the von Neumann algebra on $e^\H$ generated by the image of $C^*(H)$ in this representation. We refer to \cite{BR,MV,L,LN} for details. 

\subsection{Global automorphisms}\label{GA}
Let $\H$ be a complex Hilbert space and $e^\H$ the Fock space as above. 
A {\it symplectic map} $T: D(T)\subset \H\to \H$ is a real linear map with $D(T)$ and ran$(T)$ dense, that preserves the imaginary part of the scalar product, thus $\Im (T\xi, T\eta) = \Im(\xi,\eta)$, $\xi,\eta\in D(T)$.  

Let $T: D(T)\subset \H\to \H$ be a symplectic map. Then 
\[
\Re( iT\xi, T\eta) = \Re(i\xi, \eta)\, , \quad \xi,\eta\in D(T)\, ,
\]
thus $iT\xi\in D(T^*)$ and $T^*iT\xi = i\xi$ for all $\xi\in D(T)$, namely
\ben\label{T*}
T^*iT = i|_{D(T)} \, ,
\een
therefore ker$(T) = \{0\}$, $T$ is closable because $T^*$ is densely defined, and $T^{-1} = -iT^*i|_{\ran(T)}$, so $T^*|_{i\ran(T)}$ is a symplectic map too. It also follows that
\ben\label{Tb}
T \ \text{bounded} 
\Longleftrightarrow T^* \ \text{bounded} \Longleftrightarrow T^{-1} \ \text{bounded} \, . 
\een
We then have the associated {\it Bogoliubov homomorphism} $\th_T$ of the Weyl algebra $A\big(D(T)\big)$ 
onto $A\big(\!\ran(T)\big)$:
\[
\th_T : V(\xi) \mapsto V(T\xi)\, , \ \xi \in D(T)\, .
\]
Let $T:\H\to\H$ be a bounded, everywhere defined symplectic map; the criterion of Shale \cite{Sh} gives a necessary and sufficient condition in order that $\th_T$ be unitary implementable on $e^\H$, under the assumption that $T$ has a bounded inverse:
\ben\label{Shale}
\th_T\ \text{unitary implementable}\ \Longleftrightarrow\ T^*T - 1 \in \cL^2(\H)\ \Longleftrightarrow\ [T,i] \in \cL^2(\H)\, ,
\een
where $[T,i] = Ti -iT = Ti(1-T^*T)$ is the commutator and $\cL^2(\H)$ are the real linear, Hilbert-Schmidt operator on $\H$. 

Due to the equivalence \eqref{Tb}, the assumption $T^{-1}$ bounded in \eqref{Shale} can be dropped (as we assume that ran$(T)$ is dense). 

We shall deal with symplectic maps that, a priori, are not everywhere defined. However the following holds. 
\begin{lemma}\label{ul}
Let $T: D(T)\subset \H\to\H$ be a symplectic map. Then $\th_T$ is unitarily implementable 
iff $\th_{\overline T}$ is unitarily implementable, where $\overline T$ is the closure of $T$.  In this case, $T$ is bounded. 
\end{lemma}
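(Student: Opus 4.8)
The plan is to push everything through two elementary properties of the Fock representation. First, $h\mapsto V(h)$ is strongly continuous on $\H$: on the total set of coherent vectors one has the explicit formula $V(g)e^{h}=e^{\frac12\|h\|^2-\frac12\|g+h\|^2-i\b(g,h)}e^{g+h}$, which is norm continuous in $g$, while $\|V(g)\|=1$ uniformly. Second, $|(V(g)e^{0},V(h)e^{0})|=e^{-\frac12\|g-h\|^2}$ for all $g,h\in\H$, which is immediate from $V(g)e^{0}=e^{-\frac12\|g\|^2}e^{g}$ and $|(e^{g},e^{h})|=e^{\Re(g,h)}$.

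Before the equivalence I would record that $\overline T$ is again a symplectic map: $\Im(\overline T\xi,\overline T\eta)=\Im(\xi,\eta)$ on $D(\overline T)$ by passing to limits, $\ran(\overline T)\supseteq\ran(T)$ is dense, and $\overline T$ is injective (if $\overline T\xi=0$ then $\Im(\xi,\cdot)=0$ on the dense set $D(\overline T)$, so $\Im(\xi,i\xi)=\|\xi\|^2=0$). Hence $\th_{\overline T}$ is defined. One implication is trivial: a unitary $U$ implementing $\th_{\overline T}$, i.e. with $UV(\xi)U^{*}=V(\overline T\xi)$ for $\xi\in D(\overline T)$, a fortiori satisfies this for $\xi\in D(T)\subseteq D(\overline T)$, so it implements $\th_T$. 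For the converse, suppose $U$ implements $\th_T$. Given $\xi\in D(\overline T)$, pick $\xi_n\in D(T)$ with $\xi_n\to\xi$ and $T\xi_n\to\overline T\xi$; strong continuity of $V$ gives $V(\xi_n)\to V(\xi)$ and $V(T\xi_n)\to V(\overline T\xi)$ strongly, so letting $n\to\infty$ in $UV(\xi_n)U^{*}=V(T\xi_n)$ yields $UV(\xi)U^{*}=V(\overline T\xi)$. Thus $U$ implements $\th_{\overline T}$, and the equivalence is proved.

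It remains to deduce boundedness of $T$, and this is the one step that is not mere bookkeeping. Assume $U$ implements $\th_T$ (equivalently $\th_{\overline T}$). Fix an arbitrary $\xi\in\H$ and choose $\xi_n\in D(T)$ with $\xi_n\to\xi$. Then $V(T\xi_n)=UV(\xi_n)U^{*}\to UV(\xi)U^{*}$ strongly; in particular the unit vectors $V(T\xi_n)e^{0}$ converge, hence $(V(T\xi_n)e^{0},V(T\xi_m)e^{0})\to 1$ as $n,m\to\infty$. Since this inner product has modulus $e^{-\frac12\|T\xi_n-T\xi_m\|^2}$, we get $\|T\xi_n-T\xi_m\|\to 0$, so $(T\xi_n)$ is Cauchy, say $T\xi_n\to\eta$. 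As $\overline T$ is closed, $\xi\in D(\overline T)$ and $\overline T\xi=\eta$. Since $\xi\in\H$ was arbitrary, $D(\overline T)=\H$, and a closed everywhere-defined operator is bounded; therefore $\overline T$, and hence its restriction $T$, is bounded (and by \eqref{Tb} so is $\overline T^{-1}$).

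The crux — and the only place where any idea is needed — is the boundedness claim: a priori the images $T\xi_n$ of a convergent sequence need not converge, but the coherent-vector overlap $|(V(g)e^{0},V(h)e^{0})|=e^{-\frac12\|g-h\|^2}$ converts strong convergence of $V(T\xi_n)e^{0}$ into norm convergence of $T\xi_n$, and closedness of $\overline T$ then forces the domain to be all of $\H$. Everything else reduces to the strong continuity of the Weyl map.
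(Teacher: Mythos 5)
Your proof is correct and follows essentially the same route as the paper: both arguments rest on the strong continuity of the Weyl map on Fock space together with the Gaussian overlap formula, which converts strong convergence of the implemented Weyl operators into norm convergence of $T\xi_n$. The only (immaterial) differences are that the paper deduces boundedness directly from continuity at $0$ via $\f(V(T\xi_n))=e^{-\frac12\|T\xi_n\|^2}\to 1$, whereas you use the two-vector overlap plus closedness of $\overline T$ and the closed graph theorem.
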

\begin{proof}
 First we show that, if $\th_T$ is implemented by a unitary $U$ on $e^\H$, then $T$ is bounded. Indeed, if $\xi_n\in D(T)$ is a sequence of vectors with $\xi_n\to 0$, then $V(\xi_n) \to 1$ strongly, thus $V(T\xi_n) =UV(\xi_n)U^* \to 1$, so
\[
\f\big((V(T\xi_n)\big) = e^{-\frac12 ||T \xi_n||^2} \to 1\, ,
\] 
with $\f$ the Fock vacuum state, therefore $||T \xi_n||\to 0$ and $T$ is bounded.  

If $\th_{\overline T}$ is implemented, then $\th_{T}$ is obviously implementable by the same unitary. 
Conversely, assume that $\th_T$ is implementable by a unitary $U$ on $\H$. So $T$ is bounded. Hence $\overline T$ is a bounded, everywhere defined symplectic map. Let $\xi\in \H$ and choose a sequence of elements $\xi_n\in D(T)$ such that $\xi_n \to \xi$. Then
\[
\th_{\overline T}\big(V(\xi)\big) = V({\overline T}\xi) = \lim_n V(T \xi_n) = \lim_n UV(\xi_n)U^* = UV(\xi)U^*\, ,
\]
so
$\th_{\overline T}$ is implemented by $U$.
\end{proof}

\subsection{Hilbert-Schmidt perturbations}
Motivated by Shale's criterion, we study here Hilbert-Schmidt conditions related to the symplectic dilation of a symplectic map. 

We use the following 
{\it notations:} If $\H$ is a complex Hilbert space, $\cL^p(\H)$ denotes the space of real linear, densely defined operators $T$ on $\H$ that are bounded and the closure $\bar T$ belongs to the Schatten $p$-ideal with respect to the real part of the scalar product, $1\leq p<\infty$. 
If $\H_1,\H_2$ are complex Hilbert spaces, $T\in\cL^p(\H_1,\H_2)$ means $T^* T\in\cL^{\frac{p}2}(\H_1)$. 
If $H\subset \H$ is a standard subspace, $T\in \cL^p(H)$ means that $T$ is a real linear, everywhere defined operator on $H$ in the Schatten $p$-ideal with respect to the real part of the scalar product. Similarly, $T\in \cL^p(H_1,H_2)$ means $T\in \cL^{\frac{p}2}(H)$. 

Let now $H\subset\H$ be a factorial standard subspace of the Hilbert space $\H$ and $C: H + H' \to H + H'$ a real linear operator. As $H + H'$ is the linear direct sum of $H$ and $H'$, we may write $C$ as a matrix of operators
\ben\label{matrixsympl}
C  = 
\begin{bmatrix} C_{11}  & C_{12}\, \\
C_{21}  &  C_{22} \end{bmatrix}
\een
(the {\it symplectic matrix decomposition}). 
Thus
\[
C_{11} = P_H C |_H, \ \ C_{12} = P_H C |_{H'}, \dots 
\]
and $C_{11}$ is an operator $H\to H$, $C_{12}$ is an operator $H'\to H$, etc. 

We want to study the Hilbert-Schmidt condition for $C$. Note that
\[
C\in\cL^2(\H) \Longleftrightarrow E_H C E_H \in\cL^2(\H),\ \ E_H  C E_{H^\perp} \in\cL^2(\H)\, \dots 
\]
With $D= D_H$ the polariser and $J = J_H$ the modular conjugation,
the symplectic matrix decomposition of the complex structure is
\ben\label{ipur}
i  = 
\begin{bmatrix} D^{-1} & D^{-1}\sqrt{1 + D^{2}}J
\, \\
- J D^{-1}\sqrt{1 + D^{2}}  &  - J D^{-1} J \end{bmatrix}\, ,
\een
as follows from \eqref{tA} and the uniqueness of the dilation. 
Note, in particular, the identity
\ben\label{P'i}
P_{H'}i|_H    =  - J D^{-1}\sqrt{1 + D^{2}} \, .
\een
\blem
The following symplectic matrix representations hold:
\[
E_H =
\begin{bmatrix} 1  & \sqrt{1 + D^{2}}\,J \\
0  &  0 \end{bmatrix}\, ,\quad 
E_{H^\perp} = 
\begin{bmatrix} 0  & -\sqrt{1 + D^{2}}J  \\
0  &  1 \end{bmatrix}\, ,\quad
E_{H'} =
\begin{bmatrix} 0 & 0 \\
J \sqrt{1 + D^{2}}  & 1 \end{bmatrix}\, .
\]
\elem
\proof
We have
\ben\label{EHi}
E_H i 
= \begin{bmatrix} - D  & 0\, \\
0  &  0 \end{bmatrix}
\een
because $E_H i$ is equal to $-D$ on $H$ and zero on $H' = iH^\perp$. As
$E_H = -(E_H i)i $, the first equality in the lemma follows by matrix multiplication with \eqref{ipur}. 
The second equality is then simply obtained as
\[
E_{H^\perp} = 1 - E_H =
\begin{bmatrix} 0  & -\sqrt{1 + D^{2}}J  \\
0  &  1 \end{bmatrix} \, .
\]
Last equality follows as
\[
E_{H'} = J E_H J
\]
and the symplectic matrix decomposition of $J$ is $\begin{bmatrix} 0  & J  \\
J  &  0 \end{bmatrix}$. 
\endproof
\blem\label{En0}
Let $C: H + H' \to H+ H'$ be a real linear map such that $i Ci =  C$, with symplectic matrix decomposition \eqref{matrixsympl}. We have
\begin{gather}\label{E10}
E_H C |_H  =  C_{11} + \sqrt{1 + D^{2}}\, J C_{21}\, ,
\\
\label{E30}
E_{H} C i|_{H'} =D C_{12}\, ,
\\
\label{E20}
 E_{H'}i C |_H =   J D J  C_{21}\, ,
 \\
 \label{E40}
E_{H'} C |_{H'}  =  J \sqrt{1 + D^{2}}\,  C_{12} +  C_{22}\, .
\end{gather}
\elem
\proof
We have
\ben\label{EHC0}
E_H C  = \begin{bmatrix}  C_{11} + \sqrt{1 + D^{2}}\,J C_{21}
 & C_{12} + \sqrt{1 + D^{2}}\, J C_{22}
\, \\
0 &   0 \end{bmatrix} \, ,
\een
thus
\[
E_H C |_H  =  C_{11} + \sqrt{1 + D^{2}}\, J C_{21}\ ,
\]
namely, \eqref{E10} holds. 

Since $Ci = -i C$, we have
\[
E_{H} C i = - E_{H} i C 
=  \begin{bmatrix} D & 0 \\
0 & 0 \end{bmatrix}C \, ,
\]
so
\[
E_{H} C i = \begin{bmatrix} D C_{11}  & D C_{12} \\
0 & 0 \end{bmatrix} \, ,
\]
thus
\[
E_{H} C i|_{H'} =D C_{12}
\]
and \eqref{E30} holds. 

With $C^j = J C J$, we then get
\begin{multline*}
E_{H'}i C |_H 
= J E_H J iC  |_H 
= - J E_H J C  i |_H 
= - J E_H  C^j J i|_H 
=  J E_H  C^j i J  |_H \\
= J (E_H C^j i ) |_{H'} J 
= J D C^j_{12} J 
= J D J J C^j_{12} J 
= J D J  C_{21} \, , 
\end{multline*}
so \eqref{E20} holds. 

Similarly, from \eqref{E10} we get \eqref{E40}. 
\endproof
With $H$ a standard subspace, {\it a symplectic map of the standard subspace $H$} is a real linear map $T: H\to H$ such that
\[
\Im(Th,Tk) = \Im(h,k)\, ,\quad h,k\in H\, ,
\]
equivalently
\[
\Re(Th,D Tk) = \Re(h, Dk)\, ,\quad h,k\in H\, , 
\]
so 
\[
T\ \text{symplectic} \Leftrightarrow  T^* D T = D\, ;
\]
if $T$ is invertible, we shall say that $T$ is a {\it symplectic bijection of $H$}. 

Now, let $H$ be a factorial standard subspace and $T: H\to H$ a symplectic bijection.  
Denote by $\tilde  T$ the symplectic map $T \oplus J T J: H + H'\to H + H'$, namely
$\tilde T = TP_H + JTJ P_{H'}$, i.e. 
\[
\tilde T =
\begin{bmatrix} T &   0\\
0&   J T J \end{bmatrix}
\]
in the symplectic matrix description. We have
\[
\tilde T i = \begin{bmatrix} T D^{-1}&   T D^{-1}\sqrt{1 + D^{2}}\, J\\
- J T  D^{-1}\sqrt{1 + D^{2}}&   - J T D^{-1}J \end{bmatrix}\, ,
\]
\[
i\tilde T = \begin{bmatrix}  D^{-1}T&    D^{-1}\sqrt{1 + D^{2}}\, T J\\
- J D^{-1}\sqrt{1 + D^{2}}\,  T&   - J D^{-1} T J\end{bmatrix}\, ,
\]
\[
[\tilde T,i]  = \begin{bmatrix} [T,D^{-1}]&  \big[T, D^{-1}\sqrt{1 + D^{2}}\big] J\\
- J \big[T, D^{-1}\sqrt{1 + D^{2}}\big]&  - J [T,D^{-1}] J \end{bmatrix}\, .
\]
Note that 
\[
i[\tilde T,i]i  =  i(\tilde  T i - i \tilde T)i = - i \tilde T + \tilde T i  = [\tilde T,i]\, .
\]
\bcor\label{En}
We have
\begin{gather}\label{E1}
E_H [ \tilde T,i ] |_H = [T,D^{-1}] - \sqrt{1 + D^{2}}\,\big[T, D^{-1}\sqrt{1 + D^{2}}\big]\, ,
\\
\label{E3}
E_{H}  \big[ \tilde T,i \big] i|_{H'} = D\big[T, D^{-1}\sqrt{1 + D^{2}}\big] J \, ,
\\
\label{E2}
 E_{H'}i \big[ \tilde T,i \big] |_H 
=
-J D\big[T, D^{-1}\sqrt{1 + D^{2}}\big]\, ,
\\
\label{E4}
E_{H'} \big[ \tilde T,i \big] |_{H'}  = 
J \big(\sqrt{1 + D^{2}}\,\big[T, D^{-1}\sqrt{1 + D^{2}}\big] -  [T,D^{-1}]\big) J\, .
\end{gather}
\ecor
\proof We apply Lemma \ref{En0} with $C = \big[ \tilde T,i\big]$. 
By \eqref{E10}, we get  \eqref{E1}. 
By \eqref{E30}, we get \eqref{E3}. 
By \eqref{E20}, we get \eqref{E2}. 
By \eqref{E40}, we get \eqref{E4}. 
\endproof
\bprop\label{norm}
$[\tilde T, i]\in \cL^2(\H)$ iff both the following conditions hold:
\begin{gather*}
a)\ [T,D^{-1}] - \sqrt{1 + D^{2}}\,\big[T, D^{-1}\sqrt{1 + D^{2}}\big]\in \cL^2(H)\, ,\\ 
b)\ D\big[T, D^{-1}\sqrt{1 + D^{2}}\big]\in \cL^2(H)\, .
\end{gather*}
\eprop
\proof
Assume $[\tilde T, i]\in \cL^2(\H)$. Then the operators \eqref{E1}, \eqref{E3} are Hilbert-Schmidt, 
and this implies that the operators in the statement are in $\cL^2(H)$.  

Conversely, assume that the operators in the statement are in $\cL^2(H)$. Then the operators in Lemma \ref{En} are in $\cL^2(H)$. 

Now,
\[
E_{H^\perp} C E_{H^\perp} = i E_{H'}i C i E_{H'}i = - i E_{H'} C E_{H'}i \, ,
\]
thus
\[
E_{H^\perp} C |_{H^\perp} \in \cL^2(H^\perp) \Longleftrightarrow E_{H'} C |_{H'} \in \cL^2(H')\, ;
\]
moreover,
\ben\label{ET3bis}
E_{H} C |_{H^\perp}   \in \cL^2(H^\perp,H) \Longleftrightarrow E_{H} C i|_{H'}   \in \cL^2(H', H)\, .
\een
We conclude that all the four matrix elements in the orthogonal decomposition of $\big[\tilde T, i\big]$ are in $\cL^2(\H)$, thus
$[\tilde T, i]\in \cL^2(\H)$. 
\endproof
\bcor
Assume $[T, D^{-1}]\in \cL^2(H)$ and $\big[T, D^{-1}\sqrt{1 + D^{2}}\big] \in \cL^2(H)$. Then
$[\tilde T, i]\in \cL^2(\H)$. 
\ecor
\proof
If the assumptions are satisfied, then $a)$ and $b)$  of Prop. \ref{norm} clearly hold because $D$ and $\sqrt{1 + D^{2}}$ are  bounded.  
\endproof

\subsection{Local automorphisms}
Let now $H_k$ be standard factorial subspaces of the Hilbert spaces $\H_k$, $k=1,2$ and $T: H_1 \to H_2$ a symplectic bijection, namely $T$ is real linear, invertible and $\b_2(Th, Tk) = \b_1(h,k)$, $h,k\in H_1$, with $\b_k$ the symplectic form on $H_k$ (the restriction of $\Im(\cdot, \cdot)_k$ to $H_k$, with $(\cdot, \cdot)_k$ the scalar product on $\H_k$). Then $T$ promotes to a $^*$-isomorphism $\th_T$ between the Weyl $C^*$-algebras $C^*(H_1)$ and $C^*(H_2)$
\[
\th_T\big(V_1(h)\big) = V_2(Th)\, .
\]
With $\A_k(H_k)$ the von Neumann algebra associated with $H_k$ on the Bose Fock space $e^{\H_k}$, we want to study when $\th_T$ extends to a normal isomorphism between $\A_1(H_1)$ and $\A_2(H_2)$. 

Let $\tilde T: \H_1\to \H_2$ be the real linear operator, with domain $D(\tilde T) = H _1+ H'_1$ and range $\ran(\tilde T) = H_2 + H'_2$,
\[
\tilde T: h + J_1 k \mapsto Th + J_2 Tk\, ,\quad h,k\in H_1\, ,
\]
where $H'_k$ is the symplectic complement of $H_k$ in $\H_k$ and $J_k = J_{H_k}$. 
Then $\tilde T$ is a densely defined, real linear, symplectic map with dense range from $\H_1$ to $\H_2$.  
\blem\label{bdd}
If $\tilde Ti_1 - i_2 \tilde T$ is bounded and densely defined, then $\tilde T$ is bounded. 
\elem
\proof
$\tilde T$ is closable by Lemma \ref{ul} so $\tilde T i_1$ and  $i_2 \tilde T$ are closable too.
By assumptions, there is a bounded, everywhere defined operator $C:\H_1\to\H_2$ such that $\tilde T i_1 = i_2 \tilde T + C$ 
on $\D \equiv D(\tilde T i_1 - i_2 \tilde T)$,
so the closures of $\tilde T i_1|_\D$ and  $i_2 \tilde T|_\D$ have the same domain.   
Now
\[
\D = D(\tilde T) \cap i_1 D(\tilde T) = D(P_{H_1}) \cap i_1 D(P_{H_1})
\]
is a core for $P_{H_1}$, as follows by eq. \eqref{fP}. Indeed, $\Delta_{i_1 H_1} = \Delta_{H_1}$ and $J_{i_1 H_1} = - J_{H_1}$, so the spectral subspaces of $\Delta_{H_1}$ relative to finite closed intervals $[a,b]\subset (0,1)\cup (1,\infty)$ are in the domain of $D(P_{H_1})\cap D(P_{i_1{H_1}})$ (see \cite{CLR19}). 

Now,
\[
\tilde T = T P_{H_1} + J_2TJ_1(1 - P_{H_1}) 
\]
and one easily checks that $\D$ is a core for $\tilde T$, similarly as above. It follows that $\bar{\tilde T}i_1 = i_2\bar{\tilde T} + C$, with $\bar{\tilde T}$ the closure of $\tilde T$. 
Therefore,
$D(\bar{\tilde T}i_1)= D(i_2\bar{\tilde T})$, so $i_1 D(\bar{\tilde T}) = D(\bar{\tilde T})$. We conclude that 
\[
D(\bar{\tilde T})\supset (H_1 + H_1')+ i_1(H_1 + H_1')\supset H_1 + i_1H_1' = H_1 + H_1^\perp = \H_1\, ,
\]
so $\tilde T$ is bounded by the closed graph theorem. 
\endproof
\bprop\label{HS}
The following are equivalent:
\begin{itemize}
\item[$(i)$] There exists a unitary $U: e^{\H_1}\to e^{\H_2}$ such that $UV_1(h) U^* = V_2(Th)$, $h \in H_1$;

\item[$(ii)$] $\th_T$ extends to a normal isomorphism $\A_1(H_1) \to \A_2(H_2)$;

\item[$(iii)$]  $\tilde T^* \tilde T - 1 \in \cL^2(\H_1)$;

\item[$(iv)$] $\tilde Ti_1 - i_2  \tilde T\in \cL^2(\H_1,\H_2)$. 
\end{itemize}
\eprop
\proof
$(i) \Leftrightarrow (ii)$: Clearly $(ii)$ follows from $(i)$; we show that 
$(ii) \Rightarrow(i)$.  Let $V_k(\cdot)$ be the Weyl unitary on $e^{\H_k}$. By assumptions, the linear extension of the map $V_1(h)\mapsto V_2(Th)$, $h\in H_1$, extends to a normal isomorphism $\bar\th_T:\A_1(H_1)\to \A_2(H_2)$. Since the vacuum vector is cyclic and separating for $\A_k(H_k)$, we have the associated unitary standard implementation $U_T: e^{\H_1}\to e^{\H_2}$ of $\bar\th_T$ w.r.t. the vacuum vectors \cite{Ar74,Con74,Ha75}. 

$(i) \Leftrightarrow(iii)$: Assume $(i)$ and let $U_T$ be the vacuum unitary standard implementation  $\bar\th_T$
as above. $e^{J_k}$, the second quantisation of the modular conjugation $J_k$ of $H_k$, is the modular conjugation of the von Neumann algebra $\A_k(H)$ w.r.t. the vacuum vector $e^0$, so
we have
\[
U_T V_1(h) U_T^* = V_2(Th)\, ,\quad U_T e^{J_1} = e^{J_2} U_T \, , ,\quad h\in H_1\, ,
\]
therefore
\[
U_T V_1(h) V_1(J_1 k) U_T^* = V_2(h) V_2(J_2 k)\, ,\quad h,k\in H_1\, ,
\]
namely
\[
U_T V_1(h + J_1 k) U_T^* = V_2(Th + J_2 Tk)\, ,
\]
that is 
\ben\label{imp}
\ U_T V_1(\eta) U_T^* = V_2(\tilde T\eta)\, ,
\een
for all $\eta$ in the domain of $\tilde T$. Then $(iii)$ holds by Lemma \ref{ul} and Shale's criterion \cite{Sh}. Conversely, assuming $(iii)$, by Lemma \ref{bdd} and again by Lemma \ref{ul} and Shale's criterion, we can find a unitary $U$ such that \eqref{imp} holds. 

$(iii)$ and $(iv)$ are equivalent, by using Lemma \ref{ul} and Lemma \ref{bdd}, see e.g. \cite{LM}. 
\endproof
\bcor\label{norm2}
Let $T: H_1\to H_2$ be a symplectic bijection. Then the Bogoliubov isomorphism $\th_T: A(H_1) \to A(H_2)$ is implemented by a unitary $U: e^{\H_1} \to e^{\H_2}$ iff the following conditions hold:
\begin{gather*}
a)\ \Big(TD_1^{-1} - D_2^{-1}T\Big) 
- \sqrt{1 + D_2^{2}}\,\Big(TD_1^{-1}\sqrt{1 + D_1^{2}} - D_2^{-1}\sqrt{1 + D_2^{2}}\, T\Big)
 \in \cL^2(H_1, H_2)\\ 
b)\ D_2\Big(TD_1^{-1}\sqrt{1 + D_1^{2}}- D_2^{-1}\sqrt{1 + D_2^{2}}\, T\Big) \in \cL^2(H_1, H_2)\, .
\end{gather*}
\ecor
\proof
The above conditions are the straightforward generalisations of the conditions $a)$ and $b)$ in Proposition \ref{norm}, so the corollary follows by Proposition \ref{HS}. 
\endproof
Recall that a real linear map $T: H_1 \to H_2$ is symplectic iff $T^*D_2  = D_1 T^{-1}$,
so the conditions in the above corollary take a different form by inserting this relation. 

\section{Gaussian states, modular Hamiltonian, quasi-equivalence}\label{QF}
Let $(H,\b)$ be a symplectic space. With $\a$ a real scalar product on $H$ compatible with $\b$, 
let $\kappa_\a: H\to \H_\a$ be  the one-particle structure associated with $\a$ (Prop. \ref{onep}).

Let $e^{\H_\a}$ be the Bose Fock Hilbert space over $\H_\a$ and denote by $V_\a(\cdot)$ the Weyl unitaries acting on 
$e^{\H_\a}$ and by $e^0$ the vacuum vector of $e^{\H_\a}$, thus $V(h)\mapsto V_\a(h)$ gives a representation of $C^*(H)$ on $e^{\H_\a}$ (see for example \cite{L}). 
By \eqref{Fock}, we have
\ben\label{qfs0}
(e^0, V_\a(\kappa_\a(h))e^0) =  e^{-\frac12 ||\kappa_\a(h)||^2}
= e^{-\frac12 \a(h,h)} \, ,\quad h\in H\, .
\een
\bprop\label{qfs}
There exists a unique  state $\f_\a$ on $C^*(H)$ such that
\ben\label{qfdef}
\f_\a\big(V(h)\big) = e^{-\frac12 \a(h,h)}\, .
\een
With $\{\H_{\f_\a}, \pi_{\f_\a}, \xi_{\f_\a}\}$ the GNS triple associated with $\f_\a$, the vector $\xi_{\f_\a}$ is separating for the von Neumann algebra $\A(H)=\pi_{\f_\a}\big(C^*(H)\big)''$ iff the completion $\bar H$ of $H$ is a separating subspace, namely $\ker(D_{\bar H}^2 + 1) = \{0\}$. 
\eprop
\proof
Eq. \eqref{qfs0} shows that there exists a state $\f_a$ such that \eqref{qfdef} holds. Moreover \eqref{qfdef} determines $\f_\a$ because the linear span of the Weyl unitaries is a dense subalgebra of $C^*(H)$. 

As $\kappa_\a(H)$ is cyclic in $\H_\a$, $\overline{\kappa_\a(H)}$ is a standard subspace of $\H_\a$ iff $\overline{\kappa_\a(H)}$ is separating. On the other hand, $e^0$ is cyclic and separating 
for the von Neumann algebra generated by the $V_\a(h)$'s, $h\in H$, iff  $\overline{\kappa_\a(H)}$ is a standard subspace of $\H$, see \cite{L}. The proposition then follows by the uniqueness of the GNS representation. 
\endproof
The state $\f_\a$ determined by \eqref{qfdef} is well known and is called the {\it Gaussian}, or {\it quasi-free, state} associated with $\a$, see \cite{Pe, DG}. It is usually defined by showing directly, by positivity, that the Gaussian kernel \eqref{qfdef} defines a state. 

We summarise in the following diagram the two above considered, unitarily equivalent constructions with the GNS representation of a Gaussian state:
\[
\begin{tikzcd}
(H,\a,\b) \arrow[r, "\!\text{Weyl}", "\b" '] 
&C^*(H)\arrow[r,"\!\text{GNS}", "\f_\a" '] 
&  \H_{\f_\a}, \xi_{\f_\a} 
\\  
h\in H\arrow[urr, "\!\!\!\!\!\pi_{\f_\a}(V(h))" ' near end, bend right=7,mapsto]  
\arrow[u, hook] \arrow[d, hook]\arrow[drr, "V_\a(h)" near end, bend left=7, mapsto]  &  {}  &{}  
\\
(H,\a,\b)  \arrow[r, "\text{1-p. str.}", "\kappa_\a" '] & \H_\a \arrow[r, "\!\!\text{Fock}"] & e^{\H_\a}, e^0 \arrow[uu, leftrightarrow, dashed ]
\end{tikzcd} 
\]
As a consequence, if $H$ is a standard subspace, the modular group $\s^{\f_\a}$ of $\f_\a$ on $C^*(H)$ is given by
\[
\s_s^{\f_\a}\big(V(h)\big) = V\big(\Delta_H^{is}h\big)\, ,\quad h\in H\, , \ s\in\RR\, ,
\]
therefore the study of the modular structure of $\A(H)$ can be reduced to the study of the modular structure of $H$. 

The following quasi-equivalence criterion is related to the analysis in \cite{Ar82,vD,Ho}, although we do not rely on their work. 

In the following, we shall always deal with {\it factorial standard subspaces}. 
\bthm\label{basic}
Let $(H, \a_k, \b)$ be factorial, abstract standard subspaces, $k=1,2$.  
The Gaussian states $\f_{\a_1}$ and $\f_{\a_2}$ are quasi-equivalent iff both
\ben\label{t1}
(D^{-1}_1 -  D^{-1}_2)  - \sqrt{1 + D_2^{2}}\,\Big(D^{-1}_1\sqrt{1 + D_1^{2}} - D^{-1}_2\sqrt{1 + D_2^{2}}\Big)
\in \cL^2(H)
\een
and
\ben\label{t2}
D_2 \Big(D^{-1}_1\sqrt{1 + D_1^{2}} - D^{-1}_2\sqrt{1 + D_2^{2}}\Big) \in \cL^2(H)\, ,
\een
hold, where $D_k$ is the polariser of $(H, \a_k, \b)$. 
\ethm
\proof
Let $\H_k$ be the symplectic dilation of $(H, \b_k)$ with respect to $\a_k$; so $H\subset \H_k$ is a factorial standard subspace. We have spelled out the conditions for the symplectic map $ I: \hat H \to \hat H$ to promote a unitary between the Fock spaces over $\H_1$ and $\H_2$ ($I$ is the identity on $H\oplus H$ as vector spaces). Shale's criterion gives 
\[
 I i_1 - i_2  I \in\cL^2(\H_1 , \H_2)\, ,
\]
that entails the statement of the theorem by Prop. \ref{norm}. 
\endproof
We now consider the property
\ben\label{c5}
P_1 i_1|_H - P_2 i_2|_H \in \cL^2(H)\, ,
\een
that is
\ben\label{c1}
D^{-1}_1 -  D^{-1}_2 \in \cL^2(H)\, ,
\een
that is
\ben\label{c5b}
i_1\coth(L_1/2)|_H  - i_2\coth(L_2/2)|_H \in \cL^2(H)\, .
\een
We write $\a_1 \approx \a_2$ if Property \eqref{c5} holds. 
\bcor\label{cor0}
Assume $\a_1\approx \a_2$. 
The Gaussian states $\f_{\a_1}$ and $\f_{\a_2}$ are quasi-equivalent iff
\ben\label{c2}
D_2^{-1}\sqrt{1 + D_2^{2}}\,\Big(\sqrt{1 + D_1^{2}} - \sqrt{1 + D_2^{2}}\Big)\in \cL^2(H)
\een
and
\ben\label{c60}
\Big(\sqrt{1 + D_1^{2}} - \sqrt{1 + D_2^{2}}\Big)\in \cL^2(H)\, .
\een
\ecor
\proof
As $\a_1\approx \a_2$, i.e. $D^{-1}_1 -  D^{-1}_2 \in \cL^2(H)$, clearly \eqref{t1} is equivalent to
\ben\label{c4}
\sqrt{1 + D_2^{2}}\,\Big(D^{-1}_1\sqrt{1 + D_1^{2}} - D^{-1}_2\sqrt{1 + D_2^{2}}\Big)\in \cL^2(H)\, ,
\een
which is equivalent to \eqref{c2}. 

On the other hand, \eqref{t2} is equivalent to \eqref{c60}, again because $D^{-1}_1 -  D^{-1}_2 \in \cL^2(H)$. 
So the corollary follows by Thm. \ref{basic}. \endproof
\bcor\label{main}
Assume $\a_1\approx \a_2$.  
The Gaussian states $\f_{\a_1}$ and $\f_{\a_2}$ are quasi-equivalent iff
\ben\label{c16}
\Big(D_1^{-1}\sqrt{1 + D_1^{2}} - D_2^{-1}\sqrt{1 + D_2^{2}}\Big)\in \cL^2(H)
\een
and
\ben\label{c200D}
\Big(\sqrt{1 + D_1^{2}} - \sqrt{1 + D_2^{2}}\Big)\in \cL^2(H)\, .
\een
\ecor
\proof
Note first that, by \eqref{D12}, \eqref{c200D} is the same as 
\ben\label{c200}
\frac1{\cosh (L_1/2)}\Big|_H - \frac1{\cosh (L_2/2)}\Big|_H\in \cL^2(H)\, . 
\een
Let us now assume that $\a_1\approx \a_2$ and that \eqref{c200} holds. 
By Cor. \ref{cor0}, we have to prove that \eqref{c2} is equivalent to \eqref{c16}. 

By \eqref{P'i}, \eqref{c2} is equivalent to
\[
P'_2 i_2\Big(\frac1{\cosh (L_1/2)}\Big|_H - \frac1{\cosh (L_2/2)}\Big|_H\Big)\in \cL^2(H,\H_2)\, ,
\]
with  $P'_2$ the cutting projection $\H_2\to H$. 
As $P'_2 = 1 -P_2$, eq. \eqref{c2} is thus equivalent to
\ben\label{c12}
P_2 i_2\Big(\frac1{\cosh (L_1/2)}\Big|_H - \frac1{\cosh (L_2/2)}\Big|_H\Big)\in \cL^2(H)\,   ,
\een
namely
\ben\label{DDD}
\Big(D_2^{-1}\sqrt{1 + D_1^{2}} - D_2^{-1}\sqrt{1 + D_2^{2}}\Big)\in \cL^2(H)\, .
\een
Since $\sqrt{1 + D_1^{2}}$ is bounded, and $\a_1\approx \a_2$, the above equation is equivalent to \eqref{c16}.  
\endproof
\bcor\label{maincor}
The Gaussian states $\f_{\a_1}$ and $\f_{\a_2}$ are quasi-equivalent if
\ben\label{c161}
i_1\frac1{\sinh (L_1/2)}\Big|_H - i_2\frac1{\sinh (L_2/2)}\Big|_H\in \cL^2(H)\, .
\een
\ecor
\proof
Assume first that  $\a_1\approx \a_2$. Then \eqref{c161}, i.e. \eqref{c16}, is equivalent to \eqref{DDD}, and \eqref{DDD} implies \eqref{c200D} since $D_2$ is bounded. So Cor. \ref{main} applies and $\f_{\a_1}$ and $\f_{\a_2}$ are quasi-equivalent. 

To end our proof, we now show that  \eqref{c161} implies  $\a_1\approx \a_2$. 
Let $F$ be defined by $f(x) = F\big(g(x)\big)$, with $f(x) =  \coth(x)$, $g(x) = 1/\sinh(x)$. Then $f'(x) = F'(y)g'(x)$, with $y = g(x)$, so $F'(y) = f'(x)/g'(x) =  (1/\sinh^2(x))\big/(\cosh(x)/\sinh^2(x)) = 1/\cosh(x)$, therefore $F$ is uniformly Lipschitz. Since $0$ is not in the point spectrum of $L_k$, it follows by Cor. \ref{L2} that \eqref{c161} implies \eqref{c5b}, namely $\a_1\approx \a_2$. 
\endproof
Now, if $A_1, A_2$ are bounded, real linear operators on $H$ with trivial kernel, we have
\[
A_1  - A_2 = A_1(A^{-1}_2 - A^{-1}_1) A_2
\]
on the domain of the right hand side operator, thus
\ben\label{A-1}
A^{-1}_1  - A^{-1}_2 \in \cL^p(H) \Rightarrow A_1 - A_2 \in \cL^p(H)\, , \quad p \geq 1 \, .
\een
We then have:
\bcor
If 
\ben\label{c5b2}
i_1\coth(L_1/4)|_H  - i_2\coth(L_2/4)|_H \in \cL^2(H)\, ,
\een
then the Gaussian states $\f_{\a_1}$ and $\f_{\a_2}$ on $C^*(H)$ are quasi-equivalent. 
\ecor
\proof
By assumption \eqref{c5b2} holds, so also   
\ben
i_1\tanh(L_1/4) i_1 |_H - i_2\tanh(L_2/4) i_2 |_H \in \cL^2(H)\label{e1}\, ,
\een
holds by \eqref{A-1};
therefore
\[
i_1\big(\coth(L_1/4)|_H -  \tanh(L_1/4)|_H\big) 
- i_2\big(\coth(L_2/4)|_H -  \tanh(L_2/4)|_H\big) \in \cL^2(H)\, .
\]
Since $\coth(x/2) - \tanh(x/2) = 2/\sinh(x)$, we have
\ben\label{Ls}
i_1\frac1{\sinh (L_1/2)}\Big|_H - i_2\frac1{\sinh (L_2/2)}\Big|_H\in \cL^2(H)\, .
\een
So our corollary follows by Cor. \ref{maincor}. 
\endproof
The above corollary suggests that $\f_{\a_1}$ and $\f_{\a_2}$ are quasi-equivalent if $P_1 i_1|_H - P_2 i_2|_H$ is compact with proper values 
decaying sufficiently fast. 

\subsection{Weakly inner Bogoliubov automorphisms}
In this section, we study the condition for a real linear, symplectic bijection of a standard space to give rise to a weakly inner automorphism in the representation associated with a given Gaussian state. 

Let $H\subset\H$ be a factorial standard subspace of the complex Hilbert space $\H$, $T: H\to H$ a symplectic bijection and $\th_T$ the associated Bogoliubov automorphism of the Weyl algebra $A(H)$. 
Denote by $\A(H)$ the weak closure of $A(H)$ on $e^\H$ as in previous sections.  

We consider the real linear map on $\H$ given by 
\[
\hat T(h + h') = Th + h'\, , \quad h\in H,\, h'\in H'\, ,
\]
thus $D(\hat T) = {\rm ran}(\hat T) = H + H'$. One immediately sees that $\hat T$ is a symplectic map on $\H$. 

Note that $D([\hat T, i]) = D(\hat T)\cap iD(\hat T) = D(P_H)\cap D(P_{iH})$ is dense in $\H$, indeed a core for $P_H$, as in the proof of Lemma \ref{bdd}.  
\begin{lemma}\label{Ttilde0}
Let $T$ be a symplectic bijection on $H$.
The following are equivalent:
\begin{itemize}
\item[$(i)$] $\th_T$ extends to an inner automorphism of $\A(H)$;
\item[$(ii)$]  $\hat T^*\hat T - 1\in \cL^2(\H)$;
\item[$(iii)$]  $[\hat T, i]\in\cL^2(\H)$. 
\end{itemize}
\end{lemma}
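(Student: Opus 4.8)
The plan is to prove the chain $(i)\Leftrightarrow(ii)\Leftrightarrow(iii)$ by relating $\hat T$ to the operator $\tilde T = T\oplus JTJ$ already studied, and then invoking the results of the previous subsections together with Shale's criterion and the theory of the standard form.

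\textbf{Step 1: reduce to the case $T_2 = \id$ of the local-automorphism analysis.} First I would observe that $\th_T$ extends to an \emph{inner} automorphism of $\A(H)$ precisely when it extends to a normal automorphism of $\A(H)$ that fixes the vacuum state $\f$ (equivalently, is implemented by a unitary in $\A(H)$). But a Bogoliubov automorphism $\th_T$ of $\A(H)$ implemented by a unitary $U\in\A(H)$ automatically commutes with the modular conjugation $J=e^{J_H}$ and fixes the vacuum, hence $U$ must be the standard (vacuum) implementation, which lies in $\A(H)$ iff it commutes with $J$. So the real content is: $\th_T$ extends to a \emph{unitarily implemented} automorphism of $\A(H)$ (on $e^\H$) whose standard implementation commutes with $e^{J_H}$. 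The standard implementation of a Bogoliubov automorphism is the second quantisation of the map that is $T$ on $H$ and $J_HTJ_H$ on $H'$ — that is, $e^{\tilde T}$ with $\tilde T = TP_H + J_HTJ_HP_{H'}$. Thus $\th_T$ is inner iff $\th_{\tilde T}$ (as a Bogoliubov automorphism of $\A(\H_\RR)$) is unitarily implementable on $e^\H$.

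\textbf{Step 2: identify $\hat T$ with $\tilde T$ up to a harmless bounded symplectic factor, and apply Proposition~\ref{HS}.} The operators $\hat T = TP_H + P_{H'}$ and $\tilde T = TP_H + J_HTJ_HP_{H'}$ differ only on $H'$, where they are intertwined by the symplectic bijection $J_HTJ_H$ of $H'$: concretely $\tilde T = \hat T\,(P_H + J_HTJ_HP_{H'})$ on the common core $D(P_H)\cap D(P_{iH})$. Since $T$ is a \emph{bounded} symplectic bijection of $H$ (boundedness of $T$ together with $T^{-1}$ follows from symplecticity, cf. \eqref{Tb}), the extra factor is a bounded, boundedly invertible symplectic operator, so $[\hat T,i]\in\cL^2(\H)\Leftrightarrow[\tilde T,i]\in\cL^2(\H)$ and likewise for the $\hat T^*\hat T-1$ condition; here one uses that $\cL^2$ is an ideal among the bounded operators and that composing with bounded symplectic maps preserves the Hilbert--Schmidt class of commutators with $i$, exactly as in the manipulations of Proposition~\ref{HS}. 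Then $(i)\Leftrightarrow(ii)\Leftrightarrow(iii)$ is just the specialisation of Proposition~\ref{HS} (with $H_1=H_2=H$, $\H_1=\H_2=\H$, and $\tilde T$ as there) to this situation, noting that the unitary $U$ produced in Proposition~\ref{HS}$(i)$ is the standard implementation and hence lies in $\A(H)$ because it commutes with $e^{J_H}$ by construction.

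\textbf{Step 3: the equivalence $(ii)\Leftrightarrow(iii)$ directly.} Independently of the operator-algebraic input, $(ii)\Leftrightarrow(iii)$ is a purely Hilbert-space statement: writing $i\hat T^*i = -\hat T^{-1}$ on the appropriate domain (from $\hat T^*i\hat T=i$, i.e. \eqref{T*}), one gets $[\hat T,i] = \hat Ti(1-\hat T^*\hat T)$, and since $\hat T$, $\hat T^*$ are bounded with bounded inverses (Lemma~\ref{bdd} gives boundedness once $[\hat T,i]$ is bounded, and boundedness of $\hat T$ forces boundedness of $\hat T^{-1}$ and $\hat T^*$ by \eqref{Tb}), the operator $\hat Ti$ is bounded and boundedly invertible, so $[\hat T,i]\in\cL^2\Leftrightarrow 1-\hat T^*\hat T\in\cL^2$. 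The one subtlety is that $\hat T$ is only densely defined a priori; I would handle this exactly as in Lemma~\ref{bdd}, using that $D([\hat T,i]) = D(P_H)\cap D(P_{iH})$ is a core for $P_H$ (via formula~\eqref{fP} and the spectral subspaces of $\Delta_H$), so that the Hilbert--Schmidt condition on the core extends to the closure.

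\textbf{Main obstacle.} The genuinely delicate point is the identification in Step~1 of ``extends to an inner automorphism of $\A(H)$'' with ``$\th_{\tilde T}$ is unitarily implementable on $e^\H$'': one must argue that any unitary implementing $\th_T$ on $\A(H)$ can be taken to be the standard (vacuum) implementation — using that $\th_T$ fixes the Fock vacuum state (since $\f\circ\th_T(V(h)) = \f(V(Th)) = e^{-\frac12\|Th\|^2}$ need \emph{not} equal $\f(V(h))$ in general!). This is actually the crux: $\th_T$ fixes $\f$ only when $T$ is unitary on $H$. So for $(i)$ I should instead say: $\th_T$ extends to an \emph{inner} automorphism of $\A(H)$ iff it extends to a \emph{normal} automorphism of $\A(H)$, because $\A(H)$ is a type~III factor (the Weyl/second-quantisation von Neumann algebra of a factorial standard subspace) where every normal automorphism in the connected component is inner — more precisely one appeals to the standard-form machinery of Araki--Connes--Haagerup as in the proof of Proposition~\ref{HS}, which produces the canonical implementation $U_T$ commuting with $e^{J_H}$, hence $U_T\in\A(H)$, giving innerness. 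Making this last clause airtight — that the canonical implementation of a normal $^*$-automorphism of $\A(H)$ commuting with the modular conjugation automatically lies in $\A(H)$ — is the step I would be most careful about, though it is standard once phrased correctly.
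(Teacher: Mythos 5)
There is a genuine gap in Steps 1 and 2, and it is precisely at the point you yourself flag as ``the crux''. You try to reduce innerness of $\th_T$ to unitary implementability of $\th_{\tilde T}$ with $\tilde T = TP_H + J_HTJ_HP_{H'}$, and to close the argument you assert that a normal automorphism of $\A(H)$ is automatically inner (``every normal automorphism in the connected component is inner'' for the type III factor), equivalently that the canonical standard implementation $U_T$, because it commutes with $e^{J_H}$, must lie in $\A(H)$. Both claims are false: the standard implementation of \emph{every} automorphism commutes with the modular conjugation and preserves the natural cone by construction, so if commuting with $e^{J_H}$ forced $U_T\in\A(H)$, every implementable Bogoliubov automorphism would be inner, which is absurd (type III factors have plenty of outer automorphisms). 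Consequently your Step 2 identification $[\hat T,i]\in\cL^2\Leftrightarrow[\tilde T,i]\in\cL^2$ is also wrong: the correcting factor $P_H+J_HTJ_HP_{H'}$ is itself a symplectic map whose commutator with $i$ need not be Hilbert--Schmidt, and indeed the two conditions are genuinely different --- $[\tilde T,i]\in\cL^2$ characterises normality/implementability of $\th_T$ (Proposition \ref{HS}), while $[\hat T,i]\in\cL^2$ characterises innerness.

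The intended route for $(i)\Leftrightarrow(ii)$ is much more direct and uses the specific form of $\hat T$ (identity on $H'$, not $J_HTJ_H$): a unitary $U$ implementing $\th_T$ lies in $\A(H)$ iff it commutes with $\A(H)'=\A(H')$, iff $UV(h')U^*=V(h')$ for all $h'\in H'$; together with $UV(h)U^*=V(Th)$ this says exactly that $U$ implements the Bogoliubov automorphism of $\hat T(h+h')=Th+h'$ on $e^{\H}$ (the phases match because $\b(Th,h')=0=\b(h,h')$). So $\th_T$ is inner iff $\th_{\hat T}$ is unitarily implementable on the Fock space, and then $(ii)$ follows from Shale's criterion and Lemma \ref{ul}. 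Your Step 3, giving $(ii)\Leftrightarrow(iii)$ via $[\hat T,i]=\hat T i(1-\hat T^*\hat T)$, Lemma \ref{ul}, and the adaptation of Lemma \ref{bdd} on the core $D(P_H)\cap D(P_{iH})$, is correct and agrees with the paper.
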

\begin{proof}
Since $\A(H')$ is the commutant of $\A(H)$, $\th_T$ extends to an inner automorphism of $\A(H)$ if and only if the Bogoliubov automorphism associated with $\hat T$ is unitarily implementable on $e^\H$.
Therefore the equivalence $(i) \Leftrightarrow (ii)$ follows by Shale's criterion and Lemma \ref{ul}. 

$(ii) \Leftrightarrow (iii)$ follows again by Shale's criterion, Lemma \ref{ul} and the obvious adaptation of Lemma \ref{bdd}. 
\end{proof}
Set now $T = 1 + X$ and $\hat X = X \oplus 0$ on $H + H'$. In the symplectic matrix decomposition, we have
\begin{gather*}
\hat X i = \begin{bmatrix} X D^{-1}&   X D^{-1}\sqrt{1 + D^{2}} J\\
0 &   0 \end{bmatrix}\, ,
\\
i\hat X = \begin{bmatrix}  D^{-1}X&    0  \\
- J D^{-1}\sqrt{1 + D^{2}}\,  X&   0  \end{bmatrix}\, ,
\\
[\hat T,i]  = [\hat X,i]= \begin{bmatrix} [X,D^{-1}]&  X D^{-1}\sqrt{1 + D^{2}} J\\
J D^{-1}\sqrt{1 + D^{2}}X &  0 \end{bmatrix}\, ,
\end{gather*}
With $C = [\hat X,i]$, we apply Lemma \ref{En0}. 
Then
\begin{gather}\label{Inn1}
E_H C |_H  =  C_{11} + \sqrt{1 + D^{2}}\, J C_{21} =
[X,D^{-1}] + (D^{-1} + D)X\, ,
\\
\label{Inn3}
E_{H} C i|_{H'} =D C_{12} = D X D^{-1}\sqrt{1 + D^{2}}J\, ,
\\
\label{Inn2}
E_{H'} i C |_H =   J D J  C_{21} =  J   \sqrt{1 + D^{2}}X \, ,
\\
\label{Inn4}
E_{H'} C |_{H'}  =  J \sqrt{1 + D^{2}}\,  C_{12} +  C_{22} = 
J \sqrt{1 + D^{2}}\,   X D^{-1}\sqrt{1 + D^{2}} J \, .
\end{gather}
Note that
\[
D^{-1} + D = - i\big(\coth(L/2) - \tanh(L/2)\big) \big|_H = - i/\cosh(L/2)\sinh(L/2) \big|_H = - 2i/\sinh(L)\big|_H \, ,
\]
\[
D^{-1}\sqrt{1 + D^{2}} = - i \frac1{\sinh (L/2)}\Big|_H\, .
\]
\bprop\label{propinn}
$[\hat T,i] \in \cL^2(\H)$ iff all the operators 
\begin{gather*}
[X,D^{-1}] + (D^{-1} + D)X = XD^{-1} + DX
\, ,\\
D X D^{-1}\sqrt{1 + D^{2}}\, ,\\
\sqrt{1 + D^{2}}X\, ,\\
\sqrt{1 + D^{2}}\,   X D^{-1}\sqrt{1 + D^{2}} \, ,
\end{gather*}
are in $\cL^2(H)$. 

In particular, this is the case if $X D^{-1} \in \cL^2(H)$. 
\eprop
\proof
$[\hat T,i] \in \cL^2(\H)$ iff all the operators in \eqref{Inn1}, \eqref{Inn3}, \eqref{Inn2}, \eqref{Inn4} are Hilbert-Schmidt, so the first part of the statement holds. Now, $X D^{-1} \in \cL^2(H)$ implies that all the operators in the statement are Hilbert-Schmidt too as they are obtained by left/right multiplication of $X D^{-1}$ by bounded operators, $X D^{-1} \in \cL^2(H)$ is a sufficient condition for 
$[\hat T,i] \in \cL^2(\H)$. 
\endproof
\bthm\label{Boinn}
Let $(H,\a,\b)$ be an abstract factorial standard subspace and $T : H\to H$ a bijective symplectic map. Then $\th_T$ extends to an inner automorphism of the von Neumann algebra $\A(H)$, in the GNS representation of $\f_\a$ iff the conditions in Prop. \ref{propinn} hold. 
\ethm
\proof
The theorem follows now by Lemma \eqref{Ttilde0}. 
\endproof

\section{QFT and the modular Hamiltonian}
\label{QFT}
We now study the modular Hamiltonian in the free, massless, low-dimensional Quantum Field Theory. See \cite{BCM}, for numerical analysis in the massive case. 

\subsection{One-particle space of the free scalar QFT}
This section concerns the one-particle space of the free scalar QFT, especially in the low-dimensional case. Although we are here interested in the low-dimensional, massless case, we start by describing the general higher-dimensional case in order to clarify the general picture. 
In the following, $d$ is the space dimension, so $\RR^d$ is the time-zero space of the Minkowski spacetime $\RR^{d+1}$, cf. \cite{LM}. 

\subsubsection{Case $d\geq 2, m\geq 0$}
Let $\S$ denote the real linear space of smooth, compactly supported real functions on $\mathbb R^d$, $d\geq 2$. 

Let $H_m^{\pm1/2}$ be the  real Hilbert space of real tempered distributions $f \in S'(\bR^d)$ such that the Fourier transform $\hat f$ is a Borel function and
\ben\label{H12}
||f ||_{\pm1/2}^2  = \int_{\RR^d} {(|\bp|^2 + m^2)}^{\pm1/2} | \hat f(\bp)|^2 d\bp< +\infty\, .
\een
$\S$ is dense in $H_m^{\pm 1/2}$ and  $\mu_m : H_m^{1/2} \to H_m^{-1/2}$, with
\ben\label{mum}
\widehat{\mu_m f}({\bp}) = \sqrt{{|\bp|}^2 + m^2}\,\hat f({\bp})\, ,
\een
is a unitary operator. Then
\ben\label{imum}
\imath_m = \left[\begin{matrix}
0 & \mu_m^{-1} \\ -\mu_m  & 0  
\end{matrix}\right]
\een 
is a unitary operator $\imath_m$ on $H_m =  H_m^{1/2}\oplus H_m^{-1/2}$
with $\imath_m^2 = -1$,  namely a complex structure  on
$H_m$ that so becomes a complex Hilbert space $\H_m $ with the imaginary part of the scalar product given by
\ben\label{symp}
\Im( \langle f,g\rangle, \langle h,k \rangle )_m = \frac1 2\big( (h, g)-(f,k)\big) \, ,
\een
which is independent of $m\geq 0$ (where $(\cdot, \cdot)$ is the $L^2$ scalar product).

With $B$ the unit ball of $\mathbb R^{d}$, we shall denote by $H^{\pm1/2}_m(B)$ the  subspace of $H^{\pm1/2}_m$ associated with $B$ consisting of the distributions $f\in S'(\RR^d)$ as above that are supported in $B$. 
We have
\[
H^{\pm 1/2}_m(B) = \text{closure of $C_0^\infty(B)$ in $H^{\pm 1/2}_m$}\, ,
\]
and the standard subspace of $\H_m$ associated with $B$ is 
\[
H_m(B) \equiv H_m^{1/2}(B)\oplus H_m^{-1/2}(B)\, .
\]
Here $C^{\infty}_0(B)$ denotes the space of real $C^\infty$ function on $\mathbb R^d$ with compact support in $B$.

The $H_m(B)$'s, $m\geq 0$, are the same linear space with the same Hilbert space topologies (see e.g. \cite{LM}). We shall often identify these spaces as topological vector spaces. 

In the following, we consider the abstract standard spaces $(H,\a_m, \b)$ where $H= H_m(B)$, $\b$ is the symplectic form on $H$ given by \eqref{symp} and $\a_m$ is the real scalar product on $H$ as a real subspace of $\H_m$.

\subsubsection{Case $d=1$}
\label{d1}
$\bullet$ {\it Case $m >0$}. In this case the one-particle Hilbert space is defined exactly as in the higher dimensional case. In particular $
H_m^{\pm 1/2}$ is defined by \eqref{H12}
and  $\imath_m$ \eqref{imum}
is a complex structure on  $H_m= H^{1/2}_m\oplus H_m^{-1/2}$; so we have a complex Hilbert space $\H_m$, $m>0$. 
The subspace $H^{\pm 1/2}_m(B)$ of $H^{\pm 1/2}_m$ is again defined as in the higher dimensional case, with $B = (-1,1)$. 

We now set
\[
\dot H^{- 1/2}_m(B) = \text{closure of $\dot C_0^\infty(B)$ in $H^{-1/2}_m$},
\]
with 
\ben\label{dotS}
\dot\S = \Big\{f\in\S : \hat f(0) = \int_{\RR} f(x)dx =0\Big\} \, ,
\een 
$\dot C_0^\infty(B) =  C_0^\infty(B)\cap\dot\S$, and
\ben\label{H2}
\dot H_m(B) \equiv  H_m^{1/2}(B)\oplus \dot H_m^{-1/2}(B)  \, .
\een
\bprop\label{factdot}
$\dot H_m(B)$ is a standard subspace of
\ben\label{H1}
\dot\H_m \equiv \overline{\dot H_m(B) + \imath_m \dot H_m(B)}\,   .
\een
\eprop
\proof
As $\dot H_m(B)\subset H_m(B)$, clearly $\dot H_m(B)$ is separating, so
the statement is obvious. 
\endproof
\noindent
$\bullet$ {\it Case $m= 0$}. $H_0^{1/2}$ is defined as in the higher dimensional case \eqref{H12}:
\[
 H_0^{1/2} =\Big\{ f\in S'(\RR):\hat f \ \text{Borel function}\ \& \int_{\RR} |\bp| | \hat f(\bp)|^2 d\bp< +\infty\Big\}\,  .
\]
We now set
\[
\dot H_0^{-1/2} = \Big\{ f\in S'(\RR): \hat f \ \text{Borel function}\ \& \int_{\RR} |\bp^{-1}| |  \hat f(\bp)|^2 d\bp< +\infty\Big\}\,  .
\]
Note that
\[
\S\subset H^{\pm 1/2}_m\, , \ m > 0 \, ;\qquad \S\subset H^{1/2}_0\, ;\qquad \dot \S\subset \dot H^{-1/2}_0\, ,
\]
Then  $\imath_0$ (defined by \eqref{imum} with $m=0$)
is a complex structure on  $\dot H_0= H^{1/2}_0\oplus \dot H_0^{-1/2}$ and we get a complex Hilbert space $\dot \H_0$ with underlying real Hilbert space
$\dot H_0$. 

The subspace $H^{ 1/2}_0(B)$ of $H^{ 1/2}_0$ is defined as in the higher dimensional case. 
We also set
\[
\dot H^{- 1/2}_0(B) = \text{closure of $\dot C_0^\infty(B)$ in $\dot H^{-1/2}_0$},
\]
and
\ben\label{H20}
\dot H_0(B) \equiv  H_0^{1/2}(B)\oplus \dot H_0^{-1/2}(B)   \, .
\een
$\dot H_0(B)$ is a standard subspace of $\dot \H_0$. 
Note that, in the massless case, our {\it notation is unconventional:} $\dot\H_0$ is the usual one-particle space and $\H_0$ has not been defined yet. 
See also \cite{CM20,BFR21} for related structures. 

\subsection{The modular Hamiltonian, $d=1$}\label{MHd1}
We now describe the modular Hamiltonian associated with the unit double cone in the free, scalar QFT on the $1+1$ dimensional Minkowski spacetime. Recall that the modular Hamiltonian on the Fock space is the second quantisation of the modular Hamiltonian on the one-particle space, that will therefore be the subject of our analysis. 
In this subsection $B = (-1,1)$. 
\blem\label{equifact}
The $\dot H_m(B)$'s, $m\geq 0$, are the same linear space with the same Hilbert space topologies. 
Moreover, $\dot H_m(B)$ is a factorial standard subspace of $\dot\H_m$. 
\elem
\proof
The proof that the natural, real linear identifications of the $\dot H_m(B)$'s preserve the Hilbert space topology is a simple adaptation of the one given in the higher dimensional case, see \cite{LM}. 

We have seen in Prop. \ref{factdot} that $\dot H_m(B)$ is a standard subspace of $\dot\H_m$. 
The factoriality of $\dot H_0(B)$ follows, for example, by \cite{HL}. Now, the identification of $\dot H_m(B)$ with $\dot H_0(B)$ preserves the symplectic form. Since the factoriality is equivalent to the non-degeneracy of the symplectic form, also $\dot H_m(B)$ is factorial. 
\endproof
\blem\label{sympc}
$\dot H_m(B)'$, the symplectic complement of $\dot H_m(B)$ in $\dot\H_m$, is equal to $H_m(B)'\cap\dot \H_m$. 
\elem
\proof
The inclusion $H_m(B)'\cap\dot \H_m\subset \dot H_m(B)'$ is immediate. We prove the opposite inclusion. 
Let $f\oplus g\in \dot\H_m = H^{1/2}_m\oplus \dot H_m^{-1/2}$ belong to $\dot H_m(B)'$. By \eqref{symp},
\ben\label{fg0}
(h, g)-(f,k) = 0
\een
for all $h\oplus k \in\dot H_m(B)= H^{1/2}_m(B)\oplus \dot H_m^{-1/2}(B)$. 

Setting $k=0$, we see that $(h, g)=0$ for all $h\in C^\infty_0(B)$, so $g$ is supported in the complement $B^c$ of $B$, so $g\in H^{-1/2}_m(B^c)$ (for example by Haag duality). 

Set now $h=0$. Then $(f,k) = 0$ for all $k\in \dot H^{-1/2}_m(B)$. Let $F$ be the bounded linear functional on $H^{-1/2}_m(B)$
\[
F(k) \equiv (f,k) = \int f k\, ,\quad k\in H^{-1/2}_m(B)\, ;
\]
as $\dot H^{-1/2}_m(B)$ has codimension one in $H^{-1/2}_m(B)$, there exists $f_0\in H^{1/2}_m(B)$ such that, in particular,
\[
F(k) = \int f_0 k\, ,\quad k\in L^2(B)\, ,
\]
therefore $f_0 =0$. So $(f,k) = 0$ for all $k\in  C^\infty_0(B)$ and this implies $f \in H^{1/2}(B^c)$ by Haag duality. 
\endproof
Denote by $\dot P_m$  the cutting projection on  $\dot\H_m$ relative to $\dot H_m(B)$. 
\blem\label{cut2}
We have
\ben\label{PmM}
\dot P_m =
\left[\begin{matrix}  P_+ & 0 \\ 
0  &\dot P_-\end{matrix}\right]
\een 
with $P_+$ (resp. $\dot P_-$) the operator of multiplication by $\chi_B$ on $ H_m^{ 1/2}$ (resp. on  $\dot H_m^{- 1/2}$). 
\elem
\proof
Let $f\oplus g\in \dot\H_m = H^{1/2}_m\oplus \dot H_m^{-1/2}$ be in the domain of $\dot P_m$ and set $\dot P_m (f\oplus g) = f_0\oplus g_0 \in  \dot H_m(B)$. 
Thus $(f - f_0)\oplus (g - g_0)$ belongs to $\dot H_m(B)'$, the symplectic complement of $\dot H_m(B)$ in $\dot\H_m$; so, by Lemma \ref{sympc}, 
\[
(f - f_0)\oplus (g - g_0) \in H^{1/2}_m(B^c)\oplus \dot H^{-1/2}_m(B^c)
\]
and this shows that $\dot P_m$ is a diagonal matrix of the form \eqref{PmM}. 

We then have
\[
P_- g = g_0 = \chi_B g_0 = \chi_B\big( (g - g_0) + g_0\big) = \chi_B g \, .
\]
The equation $P_+ f = \chi_B f$, with $f$ in the domain of $P_+$, follows by similar arguments.   
\endproof
\subsubsection{$m = 0$}
In the massless case, the modular group associated with the unit, time-zero interval $B$ acts geometrically on the spacetime double cone spanned by $B$ \cite{HL}. We have:

\medskip\noindent
{\bf Theorem 5.9.}
{\it In the free scalar, massless, quantum field theory in $1+1$ spacetime dimension,
the modular Hamiltonian $\log \dot\Delta_{B,0}$  associated with the unit interval $B$, that is with the standard subspace $\dot H_0(B)\subset \dot\H_0$, is given by
\ben\label{Mmg}
\log \dot\Delta_{B,0}  = 2\pi \imath_0 \left[
 \begin{matrix}
0 & \frac12(1 - x^2) \\ \frac12(1 - x^2)\partial_x^2 - x\partial_x & 0
\end{matrix}\right] \,  .
\een
Setting $\log\dot \Delta_{B,0} = -2\pi \dot A_0$ and $\dot A_0 \equiv -\imath_0 \dot K_0$, we have that
$\dot K_0$ is essentially skew-selfadjoint on $\S\times\dot\S$. 
$\dot K_0^B = \dot K_0 |_{\dot H_0(B)}$ is skew-selfadjoint on $\dot H_0(B)$ and $C^\infty_0(B)\times \dot C^\infty_0(B)$ is a core for $\dot K_0^B$.}
\proof
The formula is obtained as in \cite{LM}, with obvious modifications. 
\endproof

\subsection{Local entropy of a wave packet, $d=1$} 
We shall be very short on the background for this section as this is explained in detail in \cite{CLR19,LM}. 

Let $\Phi$ be massless wave, $d=1$, with compactly supported, smooth Cauchy data $f,g$. 
Thus $\partial^2_t\Phi -\partial^2_x \Phi= 0$ and $f= \Phi|_{t=0}$, $g= \partial_t \Phi|_{t=0}$. 
The entropy $S_\Phi$ of $\Phi$ in $B = (-1,1)$ is given by
\[
S_\Phi = \Im(\Phi, P_H i \log\Delta_H\, \Phi) \ .
\]
Here, $H$ is the standard subspace $\dot H_0(B)\subset \dot\H_0$;
$\Delta_H$ is the modular operator and
$P_H$ is the cutting projection associated with $H$. $\Phi$ is the vector 
$f\oplus g\in \H_0 = H_0^{1/2}\oplus \dot H_0^{-1/2}$. The time-zero energy density of $\Phi$ is given by
$\langle T^{(0)}_{00}\rangle_{\Phi} = \frac12 \big( g^2  + (\partial_xf)^2   \big)$. 
\begin{theorem}\label{entropy}
The entropy $S_\Phi$ of the massless wave $\Phi$ in the unit interval $(-1,1)$ at time $t=0$ is given by
\ben\label{Sform}
S_{\Phi} = 2\pi\int^1_{-1} \frac{1 - x^2}{2} \langle T^{(0)}_{00}\rangle_{\Phi}\, dx \, .
\een
\end{theorem}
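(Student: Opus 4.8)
The plan is to evaluate $S_\Phi=\Im(\Phi,P_H i\log\Delta_H\,\Phi)$ directly, by feeding in the closed form \eqref{Mmg} of the modular Hamiltonian together with the explicit diagonal shape of the cutting projection from Lemma~\ref{cut2}; this turns the assertion into an elementary integration by parts on $B=(-1,1)$. Concretely, here $H=\dot H_0(B)\subset\dot\H_0$, multiplication by $i$ on $\dot\H_0$ is the complex structure $\imath_0$, and $P_H=\dot P_0$. By \eqref{Mmg} we may write $\log\dot\Delta_{B,0}=2\pi\imath_0\dot K_0$ with $\dot K_0=\left[\begin{smallmatrix}0 & \frac12(1-x^2)\\ \frac12(1-x^2)\partial_x^2-x\partial_x & 0\end{smallmatrix}\right]$ the matrix of differential operators occurring there, so, using $\imath_0^2=-1$,
\[
P_H\, i\log\Delta_H\,\Phi\;=\;\dot P_0\,\imath_0\bigl(2\pi\imath_0\dot K_0\bigr)\Phi\;=\;-2\pi\,\dot P_0\,\dot K_0\,\Phi,
\]
hence $S_\Phi=-2\pi\,\Im(\Phi,\dot P_0\dot K_0\,\Phi)$. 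For $\Phi=f\oplus g$ one computes $\dot K_0\Phi=\bigl(\tfrac12(1-x^2)g\bigr)\oplus\bigl((\tfrac12(1-x^2)\partial_x^2-x\partial_x)f\bigr)$, and by Lemma~\ref{cut2} the projection $\dot P_0$ acts as multiplication by $\chi_B$ in each summand, so its only effect is to restrict the subsequent $L^2$ pairings to $B$.

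Next I would expand $\Im(\Phi,\dot P_0\dot K_0\Phi)$ by the explicit symplectic form \eqref{symp}, $\Im(\langle f,g\rangle,\langle h,k\rangle)=\tfrac12\bigl((h,g)-(f,k)\bigr)$ with $(\cdot,\cdot)$ the $L^2$ scalar product, obtaining
\[
\Im(\Phi,\dot P_0\dot K_0\,\Phi)\;=\;\tfrac12\Bigl(\int_B\tfrac12(1-x^2)g^2\,dx-\int_B f\bigl(\tfrac12(1-x^2)\partial_x^2-x\partial_x\bigr)f\,dx\Bigr).
\]
The second integral is then handled by integrating by parts twice; the key identity is
\[
\int_B f\bigl(\tfrac12(1-x^2)\partial_x^2-x\partial_x\bigr)f\,dx\;=\;-\tfrac12\int_B(1-x^2)(\partial_x f)^2\,dx.
\]
Its proof rests on two observations: the coefficient $\tfrac12(1-x^2)$ vanishes at $x=\pm1$, which annihilates the boundary terms produced by the second-order part, while the boundary contributions of the first-order term $-x\partial_x$ cancel against the ones left over from the first integration by parts; hence no decay of $f$ at the endpoints is needed. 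Substituting back and recalling $\langle T^{(0)}_{00}\rangle_\Phi=\tfrac12\bigl(g^2+(\partial_x f)^2\bigr)$ gives $\Im(\Phi,\dot P_0\dot K_0\Phi)=\tfrac12\int_B(1-x^2)\langle T^{(0)}_{00}\rangle_\Phi\,dx$, and collecting the constants yields \eqref{Sform}.

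The main obstacle is not this bounded computation but making the operator-theoretic steps legitimate: one must know that $\Phi$ lies in the domain of $\log\Delta_H$, that $i\log\Delta_H\Phi$ lies in the domain of the closed real-linear operator $P_H$, and that the differential-operator representation \eqref{Mmg} may indeed be applied to $\Phi$. I would deal with this via \eqref{Mmg} itself and Lemma~\ref{equifact}: the latter guarantees that $\dot H_0(B)$ is factorial, so $P_H=\dot P_0$ is a genuine closed operator given by \eqref{fP}, while Theorem~5.9 provides $C^\infty_0(B)\times\dot C^\infty_0(B)$ as a core for $\dot K_0^B$. A density/continuity argument then reduces the general case of compactly supported Cauchy data to data lying in this core, on which all the steps above are manifestly valid, and the formula follows.
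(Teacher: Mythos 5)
Your strategy is the intended one --- the paper's own proof simply defers to the higher-dimensional computation of \cite{LM}, which is precisely this evaluation of $\Im(\Phi,P_H i\log\Delta_H\,\Phi)$ using the explicit modular Hamiltonian and the diagonal form of the cutting projection --- and your intermediate steps all check out: $\dot K_0\Phi=\bigl(\tfrac12(1-x^2)g\bigr)\oplus\bigl((\tfrac12(1-x^2)\partial_x^2-x\partial_x)f\bigr)$, Lemma \ref{cut2} reduces $\dot P_0$ to multiplication by $\chi_B$, your integration-by-parts identity is correct (the boundary term of the second-order part vanishes because $1-x^2$ does at $x=\pm1$, and the leftover $\int_B xff'\,dx$ cancels against the first-order term), and indeed $\Im(\Phi,\dot P_0\dot K_0\Phi)=\tfrac12\int_B(1-x^2)\langle T^{(0)}_{00}\rangle_\Phi\,dx\ge 0$. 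But precisely because of this, your own chain of identities terminates at
\[
S_\Phi=-2\pi\,\Im(\Phi,\dot P_0\dot K_0\Phi)=-2\pi\int_{-1}^{1}\frac{1-x^2}{2}\,\langle T^{(0)}_{00}\rangle_\Phi\,dx\,,
\]
which is the \emph{negative} of \eqref{Sform}; ``collecting the constants'' does not yield the theorem. Since a relative entropy is non-negative and the right-hand side of \eqref{Sform} is manifestly non-negative, there is a sign-convention mismatch among the three inputs you combine --- the entropy formula $S_\Phi=\Im(\Phi,P_Hi\log\Delta_H\,\Phi)$, the symplectic form \eqref{symp}, and the sign in \eqref{Mmg} (equivalently in $\dot A_0=-\imath_0\dot K_0$). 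You cannot leave this unresolved: as written, the argument proves the formula with the wrong sign, and you must locate and justify the missing minus (e.g.\ by checking the orientation conventions of \cite{L18,CLR19,LM} against \eqref{Mmg}) before the proof is complete.

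A secondary gap is the closing density argument. You propose to reduce to Cauchy data in $C^\infty_0(B)\times\dot C^\infty_0(B)$, but that core consists of data supported \emph{inside} $B$, whereas the theorem concerns arbitrary compactly supported data, in general not supported in $B$; no limit of elements of that core reaches such a $\Phi$. The domain statement you actually need from Theorem 5.9 is the other one, namely that $\dot K_0$ (not $\dot K_0^B$) is essentially skew-selfadjoint on $\S\times\dot\S$, so that \eqref{Mmg} may be applied to $\Phi=f\oplus g$ with arbitrary compactly supported smooth $f$ and $g$ (with $\int g=0$, as required for $g\in\dot H_0^{-1/2}$); one must then also check that $\dot K_0\Phi$ lies in $D(\dot P_0)=\dot H_0(B)+\dot H_0(B)'$, for which Lemma \ref{sympc} and the vanishing of the coefficient $1-x^2$ at $\pm 1$ are the relevant facts. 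With that substitution, and the sign sorted out, your computation does give \eqref{Sform}.
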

\proof
The proof follows the one in the higher dimensional case; this is possible as we now have the formula for the local modular Hamiltonian. 
\endproof
Note that the above results have a straightforward version with $B$ replaced by any other interval, same as \cite{LM}. 

\subsection{Further consequences in QFT}
In this section, we provide a direct consequence in second quantisation of our results. 
\subsubsection{Local entropy of coherent states}
By the analysis in \cite{L19,CLR19,LM}, we have an immediate corollary in Quantum Field Theory concerning the local vacuum relative entropy of a coherent state. 

Let $\A_0(B)$ be the von Neumann algebra associated with the unit space ball $B$ (thus to the causal envelope $O$ of $B$) by the free, neutral QFT on the Minkowski spacetime, $d\geq 1$, $m=0$. 
\begin{corollary}
Araki's relative entropy $S(\f_\Phi |\!| \f)$ on $\A_0(O)$ (see \cite{Ar76}) between the vacuum state $\f$ and the coherent state $\f_\Phi$ associated with the one-particle wave $\Phi\in\dot\H_0$ is given by \eqref{Sform}. 
\end{corollary}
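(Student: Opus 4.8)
The plan is to deduce the statement from Theorem~\ref{entropy} by invoking the general second-quantised identification of Araki's vacuum relative entropy of a coherent state with the one-particle entropy functional $S_\Phi$ recorded just above Theorem~\ref{entropy}.

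First I would set up the second quantised picture. By the construction of Section~\ref{BW}, the algebra $\A_0(O)=\A_0(B)$ is $\A(\dot H_0(B))$, generated on the Fock space $e^{\dot\H_0}$ by the Weyl unitaries $V(h)$, $h\in\dot H_0(B)$, and the Fock vacuum $e^{0}$ is cyclic and separating for it, with modular data the second quantisation of the modular data of the standard subspace $H=\dot H_0(B)\subset\dot\H_0$ (so $\log\Delta_\f$ is the second quantisation of $\log\Delta_H$ and $e^{J_H}$ the modular conjugation). The coherent state $\f_\Phi$ is $\f\circ\mathrm{Ad}\,V(\Phi)^{*}$, equivalently the state of $\A_0(O)$ induced by the normalised coherent vector $e^{\Phi}$ associated with the one-particle wave $\Phi=f\oplus g\in\dot\H_0$.

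Next I would quote the formula for the vacuum relative entropy of a coherent state obtained in \cite{L19} (see also \cite{CLR19,LM}): for $\Phi$ with smooth, compactly supported Cauchy data,
\[
S(\f_\Phi\,\|\,\f)\;=\;S_\Phi\;=\;\Im\big(\Phi,\,P_H\, i\,\log\Delta_H\,\Phi\big)\, ,
\]
with $P_H$ the cutting projection and $\Delta_H$ the modular operator of $H=\dot H_0(B)$, and $\f$ the vacuum, $S(\cdot\,\|\,\cdot)$ Araki's relative entropy \cite{Ar76}. The content of this step — the technical point rather than the defining property of $S(\cdot\,\|\,\cdot)$ — is the appearance of the cutting projection: since $\Phi$ need not be localised in $O$, the relative entropy is governed by the component $P_H\Phi\in H$ of $\Phi$ along $H'=\dot H_0(B)'$, and the identity $P_H=-E_H\coth(L_H/2)$ of \eqref{PE}, together with factoriality of $\dot H_0(B)$ (Lemma~\ref{equifact}) and the absence of $0$ in the point spectrum of $L_H$, is exactly what makes $P_H i\log\Delta_H|_H$ a densely defined operator and the right-hand side finite on such wave packets. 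With the $d=1$ modular Hamiltonian now available from Section~\ref{MHd1}, the argument of \cite{LM} applies verbatim in one dimension as well, so this step introduces nothing new.

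Finally I would apply Theorem~\ref{entropy}, which evaluates $S_\Phi=2\pi\int_{-1}^{1}\frac{1-x^{2}}{2}\,\langle T^{(0)}_{00}\rangle_{\Phi}\,dx$ in the $1+1$-dimensional massless theory, and recall that for $d\ge 2$ the same expression holds by \cite{LM}; neutrality of the field enters only through the identification $\A_0(O)=\A(\dot H_0(B))$ used above. Combining the two displays gives $S(\f_\Phi\,\|\,\f)$ equal to the right-hand side of \eqref{Sform}, which is the assertion. The one genuine obstacle is the middle identity, i.e.\ verifying that the abstract coherent-state relative-entropy formula of \cite{L19} applies to the present $\A_0(O)$ and $\Phi$; but this reduces to the domain facts just mentioned and is precisely the analysis cited in the statement.
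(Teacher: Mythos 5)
Your proposal is correct and follows essentially the same route as the paper: reduce Araki's relative entropy of the coherent state to the one-particle functional $S_\Phi=\Im(\Phi,P_H i\log\Delta_H\,\Phi)$ via the identification in \cite{L19,CLR19,LM}, then evaluate it by Theorem~\ref{entropy} (the $d\ge 2$ case being already in \cite{LM}). The paper's own proof is just a terser statement of exactly this argument.
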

\begin{proof}
The case $d\geq 2$ is proved in \cite{LM}.  By applying Theorem \ref{entropy}, the corollary follows now in the $d=1$ case too as in \cite{L19,CLR19}. 
\end{proof}

\section{Appendixes}
\subsection{Functional calculus for real linear operators}
The following proposition is part of Prop. 2.1 of \cite{LM}. Let $\B$ be the real algebra of complex, bounded Borel functions on $\RR$ such that  $f(-t) = \bar f(t)$
\bprop\label{inv}
Let $\H$ be a Hilbert space, $H\subset \H$ a closed, real linear subspace and $A :D(A)\subset \H\to \H$ a selfadjoint operator. With $K = iA$, the following are equivalent:
\begin{itemize}
\item[$(i)$]$e^{isA}H = H,   \ s\in \mathbb R$,
\item[$(ii)$] $f(A)H \subset H$,\ $f\in\B$,
\item[$(iii)$] 
 $D(K)\cap H$ is dense in $H$, $K(D(K)\cap H)\subset H$ and $K: (D(K)\cap H) \subset H\to H$ is skew-selfadjoint on $H$. 
\end{itemize}
\eprop
If $A$ and $H$ are as in Prop. \ref{inv}, we shall say that $H$ is $iA$-invariant. 

Let now $H$ be a real Hilbert space and $H_\bC$ the complexified Hilbert space, namely $H_\bC= H\oplus H$ with complex structure $\iota = \begin{bmatrix} 0 & -1\\ 1 & 0 \end{bmatrix}$. We write elements $x\in H_\bC$ as $x = \xi + \iota \eta$, $\xi, \eta\in H$. We have
\[
(\xi + \iota \eta , \xi' + \iota \eta') = (\xi, \xi') +  (\eta ,\eta') + i(\xi  , \eta')  - i (\eta , \xi' )\, ,
\]
\[
||\xi + \iota \eta||^2 = ||\xi||^2 + ||\eta||^2\, .
\]
 Let $T$ be a real linear, bounded operator on $H$. We denote by $\check T$ its promotion to $H_\bC$:
 \[
 \check T : \xi + \iota \eta \mapsto  T\xi + \iota T\eta\, ,
 \]
namely $\check T$ is the unique complex linear operator on $H_\bC$ that restricts to $T$ on $H$. Then $||\check T || = || T||$ because
\[
||\check T(\xi + \iota \eta)||^2 = ||T\xi||^2 + ||T\eta||^2 \leq ||T||(||\xi||^2 + ||\eta||^2) = ||T||\, ||\xi + \iota \eta||^2 \, .
\]
Note that \[
T\in \cL^2(H) \Leftrightarrow \check T\in \cL^2(H_\bC)\, ,
\]
indeed $||\check T||^2_2 = ||T||_2^2$ because a real orthonormal basis $\{e_k\}$ for $H$ is also a complex orthonormal basis for $H_\CC$ and
\[
||\check T||^2_2 = ||T||_2^2 = \sum_k ||Te_k||^2\, .
\]
Assume  that $T$ is skew-selfadjoint on $H$, namely $T^* = -T$. Then $\check T$ is skew-selfadjoint as complex linear operator on $H_\bC$, so $\iota \check T$  is a bounded selfadjoint operator on $H_\bC$. With $f$ a continuous complex function on $\bR$, we may define the complex linear operator $f(\iota \check T)$ on $H_\bC$ by the usual continuous functional calculus. 
Let then $f\in\B$; by Prop. \ref{inv} we have
\[
f(\iota \check T) H \subset H\, . 
\]
\bprop\label{PAX}
Let $H\subset\H$ be a standard subspace and $T$ a skew selfadjoint operator on $H$ as above. 
Suppose that
\ben\label{TX}
T = i X |_H
\een
with $X$ a selfadjoint operator on $\H$. With $A= -\iota \check T$ the selfadjoint operator on $H_\bC$ as above, we have 
\ben\label{AX}
 f(A)|_H =  f(X) |_H\, , 
\een
for every $f\in\B$. 
\eprop
\proof 
The statement holds if $f(x) = e^{ix}$ because $T$ is the infinitesimal skew-selfadjoint generator of $e^{isA}|_H = e^{isX}|_H$. So it holds if $f$ is the Fourier transform of a real $L^1$-function $g$ as
\[
 f(A)|_H = \int g(s) e^{-isA}|_H ds = \int g(s) e^{-isX}|_H ds = f(X)|_H 
\]
Then \eqref{AX} holds for every continuous function with compact support $f\in\B$, as it can be uniformly approximated by functions as above by the Stone-Weierstrass theorem. 

Let now $f$ be any function in $\B$ and fix two vectors $\xi,\eta\in H$. There exists a uniformly bounded sequence of continuous functions $f_n\in\B$ with compact support such that $f_n\to f$ almost everywhere with respect to the spectral measures of $A$ and $X$ associated with $\xi,\eta$. Then
\[
(\xi,  f(A)\eta) = \lim_n (\xi,  f_n(A)\eta) = \lim_n (\xi, f_n(X)\eta) = (\xi, f(X)\eta)
\]
by the Lebesgue dominated convergence theorem, that concludes our proof because $\xi,\eta$ are arbitrary.

\subsection{Operator Lipschitz perturbations} 
The next theorem is due to Potatov and Sukochev \cite{PS11}.  
\bthm\label{PS}
Let $A_1, A_2$ be selfadjoint operators on a Hilbert space $\H$ and $f$ a uniformly Lipschitz function on $\RR$.   
If $A_1 - A_2\in\cL^p(\H)$, with $p >1$,
then also $f(A_1) - f(A_2)\in\cL^p(\H)$. 
\ethm
\noindent
Note that, in Thm. \ref{PS}, it suffices to assume that $(A_1 - A_2)|_\D\in\cL^p(\H)$ with $\D$ a core for $A_1$ or $A_2$, since then $\D$ is a core for both $A_1$ or $A_2$ and $D(A_1) = D(A_2)$ because $A_1 - A_2$  is bounded. 

The following corollary was communicated to us by F. Sukochev. 
\bcor\label{PScor}
Let $A_k$ be a selfadjoint operator on the Hilbert space $\H_k$, $k=1,2$, and suppose that $\H_1$ and $\H_2$ are the same topological vector space, that we call $\H$. Then
\[
A_1 - A_2\in\cL^p(\H)\implies f(A_1) - f(A_2)\in\cL^p(\H)\, ,
\]
$p>1$, for every uniformly Lipschitz function $f$ on $\RR$.   
\ecor
\proof
Let $C: \H_1 \to \H_2$ be the complex linear identification of $\H_1$ and $\H_2$ as topological vector spaces. So $C$ is a bounded operator with bounded inverse $C^{-1}$. Then we have to show that 
\[
A_1 - C^{-1} A_2 C\in\cL^p(\H_1) \implies f(A_1) - C^{-1}  f(A_2) C\in\cL^p(\H_1)\, ,
\]
or, equivalently, that
\[
CA_1 -  A_2 C\in\cL^p(\H_1,\H_2) \implies C f(A_1) -  f(A_2) C\in\cL^p(\H_1,\H_2)\, .
\]
With $\K = \H_1\oplus \H_2$, the operator $A = A_1\oplus A_2$ is selfadjoint on $\K$. Set $V = \begin{bmatrix} 0 & 0\\C & 0\end{bmatrix}$; then
\[
V A  - A V= \begin{bmatrix} 0 & 0\\CA_1 -  A_2 C & 0\end{bmatrix}
\]
and
\[
V f(A) -  f(A) V = \begin{bmatrix} 0 & 0\\Cf(A_1) -  f(A_2) C & 0\end{bmatrix}\, ,
\]
so we have to show that 
\[
V A - A V\in\cL^p(\K)\implies V f(A)  -  f(A) V\in\cL^p(\K)\, ,
\]
that follows by \cite[Eq. (14)]{PS11}. 
\endproof
\bcor\label{L2}
Let $H_k\subset \H_k$ be a standard subspace and $X_k$ a selfadjoint operator on $\H_k$ such that $H_k$ is $i_k X_k$-invariant, $k =1,2$. 
Suppose that $H_1$ and $H_2$ are the same real linear space $H$ with equivalent scalar products. 
Then
\[
i_1 X_1|_H - i_2 X_2|_H \in \cL^p(H) \implies
i_1 f(X_1)|_H - i_2 f(X_2)|_H \in \cL^p(H)  \, ,
\]
$p>1$, for every uniformly Lipschitz function $f$ on $\RR$ such that $f(-x) = -\overline{f(x)}$. 
\ecor
\proof
Let ${H_k}_\CC$ be the usual complexification of the real Hilbert space $H_k$. Then ${H_1}_\CC$ and ${H_2}_\CC$ are equivalent complex Hilbert spaces. 

Let $A_k$ be the selfadjoint extension of $X_k$ to ${H_k}_\CC$ as above; by Prop. \ref{PAX},  we have
\begin{multline*}
i_1 X_1|_H - i_2 X_2|_H\in \cL^p(H)\implies
A_1 - A_2  \in \cL^p(H_\CC)\implies
\iota f(A_1) - \iota f(A_2)  \in \cL^p(H_\CC)\\ \implies
\iota f(A_1)|_H - \iota f(A_2)|_H \in \cL^p(H)\implies
i_1 f(X_1)|_H - i_2 f(X_2)|_H  \in \cL^p(H) \, .
\end{multline*}
\endproof

\subsection{Extensions of the Laplacian via Helmholtz operator}
\label{Laplacian}
Let $\H$ be a Hilbert space, $\K$ a closed subspace and $A: D(A)\subset \H\to \H$ a positive selfadjoint linear operator. Assume that
\[
D_0 = \big\{\x\in D(A)\cap \K : A\x\in\K\big\}
\]
 is dense in $\K$ and denote by $A_0$ the restriction of $A$ to $D_0$, as operator $\K \to\K$. Clearly $A_0$ is a positive Hermitian operator on $\K$. We want to study the selfadjoint extensions of $A_0$. 

Choose $m>0$, then $(A + m^2)^{-1}$ is a bounded selfadjoint operator on $\H$ whose norm is $||(A + m^2)^{-1}||\leq 1/m^2$.  
With $E$ the orthogonal projection of $\H$ onto $\K$, set
\ben\label{TA}
T = E(A +m^2)^{-1}|_\K\, .
\een
Then $T$ is a bounded, selfadjoint operator on $\K$ and $||T||\leq 1/m^2$. We have
\ben\label{BA}
T(A_0 + m^2) \xi = \xi\, ,\quad \xi\in D_0\, .
\een
We note the following.
\smallskip

\noindent
$\bullet$ $\ker(T) = \{0\}$. Let $\xi\in \K$; since $T\xi = 0$ implies
\[
(\xi, T\xi) =  (\xi, E(A + m^2)^{-1}\xi) = (\xi, (A +m^2)^{-1}\xi) = ((A +m^2)^{-1/2}\xi, (A +m^2)^{-1/2}\xi)=0\, ,
\]
we have
\[
T\xi = 0 \implies  (A +m^2)^{-1/2}\xi = 0 \implies \xi = 0\, .
\]
$\bullet$ Let $A_m$ be defined by $ (A_m + m^2) \equiv T^{-1}$. Then $A_m$ is a positive, selfadjoint extension of $A_0$ on $\K$ and $A_m \geq m^2$.  Indeed, eq. \eqref{BA} implies
\[
T^{-1}\xi = (A_0 + m^2)\xi \, , \quad \xi\in D_0\, .
\]
\smallskip

\noindent
$\bullet$ By theorems of von Neumann, Krein, Friedrichs et al. (see \cite{AS,Sch}), every positive selfadjoint extension of $A_0$ lies between $A_{\min}$ and $A_{\max}$, where where $A_{\min}$ and $A_{\max}$ are respectively the Krein and the Friedrichs extension of $A_0$ on $\K$. 
In particular,
\ben\label{minmax}
A_{\min} \leq A_m \leq A_{\max}\, ,
\een
in the quadratic form sense.   

Consider now the case of $\K = L^2(B)\subset \H = L^2(\RR^d)$. If $f\in \C^\infty(\partial B)$, consider the exterior Dirichlet problem for the Helmholtz operator: find a smooth function $f^c$ on the complement $B^c$ of $B$ such that:
\[
f^c |_{\partial B} = f\, , \quad  (\nabla^2 - m^2)f^c = 0 \ \text{on the complement of}\  \bar B\, ;
\]
this problem is studied e.g. \cite{MS}. 

Denote by $C_m$ the space of all $f\in C^\infty(\partial B)$ such that $f^c$ exists with $f^c$ and the partial derivatives of all order tending to zero  as $r = |x| \to +\infty$ faster than any inverse power of $r$.  In this case the solution $f^c$ is unique by the maximum principle. 

We sketch the following proposition. 
\bprop\label{Aext}
Let $\H = L^2(\RR^d)$, $\K = L^2(B)$, and $A = -\nabla^2$  be the Laplacian on $L^2(\RR^d)$; then 
\[
A_m =  -\nabla^2_m\, ,
\]
where $\nabla^2_m$ is the Laplacian on $L^2(B)$ with boundary condition
\[
\partial^-_r f  = - \partial^+_r f^c\ {\rm on}\ \partial B\, ,  
\]
more precisely, $D_m \equiv \big\{f\in C^\infty(\bar B) : f|_{\partial B}\in C_m,\ \partial^-_r f = -\partial^+_r f^c\ {\rm  on}\ \partial B\big\}$ is a core for $A_m$, with $\partial_r^\pm$ denoting the outer/inner normal derivative. 
\eprop
\proof
Let $g\in C_0^\infty(B)$ and $f = (A +m^2)^{-1}g$. Then $f\in D(\nabla^2)$ and $f$ is a solution of the equation 
$
(-\nabla^2 + m^2)f = g
$
on $\RR^d$. In particular $(-\nabla^2 + m^2)f = 0$ on $B^c$, namely 
$f|_{B^c} = {(f |_{\partial B}})^c$. As $g\in C_0^\infty(B)$, $f$ belongs to the Schwarz space $S(\RR^d)$, thus $f|_{B^c}\in C_m$. 

With $T$ given by \eqref{TA}, we have $Tg = f|_{\bar B}$; as $T$ is a bounded operator on $L^2(B)$ and $C_0^\infty(B)$ is dense in $L^2(B)$,
the domain $T C_0^\infty(B)$ is a core for $A_m = T^{-1}$. Since $T C_0^\infty(B)\subset D_m$, we have that $A_m$ is essentially selfadjoint on $D_m$. Clearly,
$A_m = -\nabla_m^2$ on $D_m$. 

Now $-\nabla_m^2$ is Hermitian on $D_m$ by the Green identity (consider the integration on the boundary of a corona $1\leq r\leq R$ and then let $R\to
\infty$), so we conclude that $A_m = -\nabla_m^2$ because selfdajoint operators are maximal Hermitian.  
\endproof
\noindent
{\bf Acknowledgements.} 
We thank R. Conti and G. Morsella for several valuable comments, also in relation to their work in progress, F. Sukochev for providing Corollary \ref{PScor}, D. Bahns, K.H. Rehren and the Alexander von Humboldt Foundation for the invitation at the University of G\"ottingen during November 2021, where the final part of this paper has been written, and D. Buchholz for stimulating discussions. 
\smallskip

\noindent
We acknowledge the MIUR Excellence Department Project awarded to the Department of Mathematics, University of Rome Tor Vergata, CUP E83C18000100006.
\bigskip

\noindent
Data sharing not applicable to this article as no datasets were generated or analysed during the current study.


\begin{thebibliography}{99}\itemsep-2pt

\bibitem{AS} {\sc A. Alonso, B. Simon},
{\it The Birman-Kre\v{\i}n-Vishik theory of selfadjoint extensions of semibounded operators}, 
J. Operator Theory 4 (1980), 251--270 and 6 (1981), 407.

\bibitem{Ar74} {\sc H. Araki}, 
{\it Some properties of modular conjugation operator of von Neumann algebras and a non-commutative Radon-Nikodym theorem with a chain rule},
Pacific J. Math. 50 (1974), 309--354.

\bibitem{Ar76} {\sc H. Araki},
{\it Relative entropy of states of von Neumann algebras},
Publ. RIMS Kyoto Univ. 11 (1976), 809--833. 

\bibitem{Ar82} {\sc  H. Araki, S.  Yamagami},
{\it  On quasi-equivalence of quasifree states of the canonical commutation relations},
Publ. Res. Inst. Math. Sci. 18 (1982), no. 2, 703-758 (283-338)

\bibitem{BFR21} {\sc D. Bahns, K. Fredenhagen, K. Rejzner},
{\it Local Nets of Von Neumann Algebras in the Sine-Gordon Model},
Commun. Math. Phys. 383, (2021) 1--33.

\bibitem{BCD}  
{\sc  H.  Bostelmann, D.  Cadamuro, S.  Del Vecchio},
\emph{Relative entropy of coherent states on general CCR algebras},
Commun. Math. Phys. 389,  (2022) 661--691. 

\bibitem{BCM} {\sc H.\ Bostelmann, D.\ Cadamuro, C.\ Mintz}, 
{\it On the mass dependence of the modular operator for a double cone}, 
arXiv:2209.04681.

\bibitem{BR}  {\sc O. Bratteli, D. Robinson},
``Operator Algebras and Quantum Statistical Mechanics'', 
I \& II, Berlin-Heidelberg-New York: Springer Verlag, 1987 \& 1997.

\bibitem{CLR19} {\sc F. Ciolli, R. Longo, G. Ruzzi}, 
{\it The information in a wave}, 
Commun. Math. Phys. 379,  (2020) 979--1000.

\bibitem{CLRR21} {\sc F. Ciolli, R. Longo, A. Ranallo, G. Ruzzi}, 
{\it Relative entropy and curved spacetimes},
J.  Geom.  Phys. 172 (2022), 104416. 

\bibitem{Con74} {\sc A. Connes},
{\it Caract\'erisation des espaces vectoriels ordonn\'es sous-jacents aux alg\`ebres de von Neumann}, 
Ann. Inst. Fourier (Grenoble) 24 (1974), no. 4, 121--155. 

\bibitem{CM20} {\sc R. Conti, G. Morsella},
{\it Asymptotic morphisms and superselection theory in the scaling limit II: Analysis of some models},
Commun. Math. Phys. 376,  (2020) 1767--1801.

\bibitem{Dav}{\sc E.B. Davies}, 
``Spectral Theory and Differential Operators'',
Cambridge Studies in Advanced Mathematics 42, Cambridge University Press, Cambridge, 1995.

\bibitem{DG}{\sc J. Derezi\'nski,  C. G\'erard},
``Mathematics of quantization and quantum field'', 
Cambridge Monographs on Mathematical Physics. Cambridge University Press, Cambridge, 2013

\bibitem{EK} {\sc D.E. Evans, Y. Kawahigashi}, 
``Quantum Symmetries on Operator Algebras'',
Oxford University Press, 1998. 

\bibitem{FG1} {\sc F. Figliolini, D. Guido},  
{\it The Tomita operator for the free scalar field},
Ann. Inst. H. Poincar\'e Phys. Th\'eor. 51 (1989), no. 4, 419--435. 
 
\bibitem{FG} {\sc F. Figliolini, D. Guido},  
{\it On the type of second quantization factors},
J. Operator Theory 31 (1994), no. 2, 229--252.

\bibitem{Gr} {\sc B.  Gramsch},
{\it Zum Einbettungssatz von Rellich bei Sobolevr\"aumen},
Math.  Zeitschriften (1967), 81-87.

\bibitem{Gru} {\sc G. Grubb},
{\it Spectral asymptotics for the ``soft'' selfadjoint extension of a symmetric elliptic differential operator},
J. Oper. Th. 10 (1983), no. 1, 9--20.

\bibitem{Haag} {\sc R. Haag}, 
``Local Quantum Physics", Springer-Verlag (1996).

\bibitem{Ha75} {\sc U. Haagerup},
{\it The standard form of von Neumann algebras},
Math. Scand. 37 (1975), no. 2, 271--283. 

\bibitem{HL} {\sc P.D. Hislop, R. Longo}, 
{\it Modular structure of the local algebras associated with the free massless scalar field theory}, 
Commun. Math. Phys. 84 (1982), 71--85. 

\bibitem{Ho} {\sc A.S. Holevo},
{\it On quasi-equivalence of locally normal states},
Teoret. Mat. Fiz. 13 (1972), 184--199. 

\bibitem{Kay} {\sc B.S. Kay},
{\it A uniqueness result for quasifree KMS states},
Helv. Phys. Acta 58 (1985), no. 6, 1017--1029. 

\bibitem{LRT} {\sc P. Leyland, J. E. Roberts, D. Testard}, 
{\it Duality for the free electromagnetic field}, 
Marseille preprint 1976, unpublished.

 \bibitem{L}{\sc R. Longo}, 
``Lectures on Conformal Nets", preliminary lecture notes that are available at
http://www.mat.uniroma2.it/{$\sim$}longo/Lecture\_Notes.html\ . 

\bibitem{LN}{\sc R. Longo},  
{\it Real Hilbert subspaces, modular theory, $SL(2,\mathbb R)$ and CFT}, 
in: ``Von Neumann algebras in Sibiu'', 33-91, Theta 2008.

\bibitem{L18} {\sc R. Longo}, 
{\it Entropy distribution of localised states}, 
Commun. Math. Phys. 373 (2018), 473--505.  

\bibitem{L19} {\sc R. Longo}, 
{\it Entropy of coherent excitations}, 
Lett. Math. Phys. 109 (2019),  2587--2600. 

\bibitem{LW} {\sc R. Longo}, 
{\it Modular structure of the Weyl algebra},
Commun. Math. Phys. 392, 145--183 (2022), Correction, ibidem https://doi.org/10.1007/s00220-023-04674-0  

\bibitem{LM} {\sc R. Longo, G. Morsella}, 
{\it The massless modular Hamiltonian}, 
Commun. Math. Phys. https://doi.org/10.1007/s00220-022-04617-1

\bibitem{LX21} {\sc R. Longo, F. Xu}, 
{\it Von Neuman Entropy in QFT},
Commun. Math. Phys. 381 (2021), 1031--1054. 

\bibitem{MV} {\sc J. Manuceau, A. Verbeure}, 
{\it Quasi-free states of the C.C.R.--Algebra and Bogoliubov transformations},
Commun. Math. Phys. 9, 293--302 (1968).

\bibitem{MS} {\sc N. Meyers, J. Serrin},
{\it The Exterior Dirichlet Problem for Second Order Elliptic Partial Differential Equations},
J.  Math. Mech.  9, No. 4 (1960),  513--538.

\bibitem{Pe} {\sc D. Petz}, 
``An Invitation to the Algebra of Canonical Commutation Relations",
Leuven Notes in Mathematical and Theoretical Physics, Vol. 2, Ser. A, 1990. 

\bibitem{PS11} {\sc D. Potapov, F. Sukochev},
{\it Operator-Lipschitz functions in Schatten-von Neumann classes},
Acta Math. 207, 375--389 (2011). 

\bibitem{PS} {\sc R.T. Powers, E. St{\o}rmer},
{\it Free states of the Canonical Anticommutations Relations},
Commun. Math. Phys. 16 (1970), 1--33.

\bibitem{RV}{\sc M. Rieffel, A. Van Daele},
{\it A bounded operator approach to Tomita-Takesaki theory},
Pacific J. Math. 69 (1977), no. 1, 187--221.

\bibitem{Sch}{\sc K. Schm\"udgen}, 
``Unbounded self-adjoint operators on Hilbert space'',
Graduate Texts in Mathematics, 265. Springer, Dordrecht, 2012.

\bibitem{Sh}{\sc D. Shale}, 
{\it Linear symmetries of free boson fields},
Trans. Amer. Math. Soc. 103 (1962), 149--167. 

\bibitem{Si}{\sc  B. Simon},
``Trace Ideals and their Applications'', 
AMS Math. Surveys and Monograph 120, 2005. 

\bibitem{Tak} {\sc M. Takesaki}, 
``Theory of Operator Algebras'', I \& II, 
Springer-Verlag, New York-Heidelberg, 2002 \& 2003.

\bibitem{vD}{\sc A. Van Daele}, 
{\it Quasi-equivalence of quasi-free states on the Weyl algebra},
Commun. Math. Phys. 21 (1971), 171--191.

\bibitem{Wo}{\sc S.L. Woronowicz},
{\it On the purification map}, 
Commun. Math. Phys. 30 (1973), 55--67. 


\end{thebibliography}
\end{document}